\documentclass[a4paper,UKenglish,cleveref, autoref, thm-restate]{lipics-v2021}
\usepackage{booktabs}

\usepackage{amsmath}
\usepackage{amssymb}
\usepackage{amsfonts}
\usepackage{mathtools}

\usepackage[linesnumbered,ruled,vlined]{algorithm2e}
\usepackage{graphicx}
\usepackage{tikz}
\usetikzlibrary{arrows.meta, positioning, fit, backgrounds, calc}

\usepackage{pgfplots}
\pgfplotsset{compat=1.18}

\usepackage{caption}
\usepackage{subcaption}
\usepackage{booktabs}
\usepackage{xcolor}
\usepackage{xspace}
\usepackage{textcomp}
\usepackage{csquotes}
\usepackage{multicol}
\usetikzlibrary{decorations.pathreplacing}

\usepackage[utf8]{inputenc}
\usepackage[T1]{fontenc}
\usetikzlibrary{fit}
\usepackage{float}    % for better float control
\usepackage{array}

\newcommand{\bpqbc}{balanced $(3,3)$-biclique\xspace}
\newcommand{\pqbc}{$(p,q)$-biclique\xspace}

\newcommand{\pwedge}{$p$-\textsc{wedge}\xspace}

\newcommand{\stredge}{signed $3$-wedge\xspace}
\newcommand{\spwedge}{signed $p$-wedge\xspace}

\title{Counting Small Balanced $(p,q)$-bicliques in Signed Bipartite Graphs} %TODO Please add

\author{Mekala Kiran}{Bits Pilani Hyderabad Campus, [Hyderabad], India}{p20220017@hyderabad.bits-pilani.ac.in}{}{}

\author{Apurba Das}{Bits Pilani Hyderabad Campus, [Hyderabad], India}{apurba@hyderabad.bits-pilani.ac.in}{}{}

\author{Suman Banerjee}{Indian Institute of Technology Jammu, [Jammu \& Kashmir], India}{suman.banerjee@iitjammu.ac.in}{}{}

\author{Tathagata Ray}{Bits Pilani Hyderabad Campus, [Hyderabad], India}{rayt@hyderabad.bits-pilani.ac.in}{}{}
\ccsdesc[100]{\textcolor{red}{Replace ccsdesc macro with valid one}} %TODO mandatory: Please choose ACM 2012 classifications from https://dl.acm.org/ccs/ccs_flat.cfm 

\keywords{Bipartite Graph, Bi-Clique, Motif,  Signed bipartite graph, Wedge.}

\category{} %optional, e.g. invited paper

\nolinenumbers %uncomment to disable line numbering

\EventEditors{John Q. Open and Joan R. Access}
\EventNoEds{2}
\EventLongTitle{24th Symposium on Experimental Algorithms(SEA 2026)}
\EventShortTitle{SEA 2026}
\EventAcronym{SEA}
\EventYear{2026}
\EventDate{June 22–24, 2026}
\EventLocation{Copenhagen, Denmark}
\EventLogo{}
\SeriesVolume{42}
\ArticleNo{23}
\begin{document}

\maketitle

%TODO mandatory: add short abstract of the document
% \begin{abstract}
% Motif-based analysis is essential for understanding the structure of large-scale networks. In contrast to unsigned graphs, motif analysis in signed bipartite graphs has received limited attention. The smallest non-trivial motif in a signed bipartite graph is a signed butterfly, or $(2,2)$-biclique, which captures only local balance patterns and cannot reveal higher-order balance relationships. To overcome this limitation, we study balanced $(3,3)$-bicliques, the smallest dense extension beyond butterflies, which model global balance consistency in signed bipartite graphs. To the best of our knowledge, this is the first work on counting balanced $(3,3)$-bicliques in signed bipartite graphs, where balance is defined by the absence of unbalanced butterflies. We present a baseline adapted from biclique enumeration methods and analyse its limitations. We then propose two efficient algorithms: a wedge-centric approach (BBWC) that enforces balance constraints during enumeration and a more efficient triplet-centric approach (BBTC) that directly enumerates feasible vertex triplets. Extensive experiments on large real-world signed bipartite datasets demonstrate that the triplet-centric approach significantly outperforms the baseline, achieving speedups of over $100\times$ compared to SBCList++, an adaptation of the BCList++ algorithm.
% \end{abstract}

\begin{abstract}
% A bipartite graph is the best representation for many complicated networks in real-world applications. Motif-based analysis is essential for understanding the structure of large-scale networks, and bipartite graphs are no exception. In contrast to unsigned graphs, motif analysis in signed bipartite graphs has received limited attention. The smallest non-trivial motif in a signed bipartite graph is a signed butterfly, or $(2,2)$-biclique, which captures only local balance patterns and cannot reveal higher-order balance relationships. To overcome this limitation, we study balanced $(3,3)$-bicliques, the smallest dense extension beyond butterflies in signed bipartite graphs. To the best of our knowledge, this is the first work on counting balanced $(3,3)$-bicliques in signed bipartite graphs, where balance is defined by the absence of unbalanced butterflies. As a baseline, we first adapt and extend the state-of-the-art BCList++ algorithm for unsigned bipartite graphs to incorporate edge signs, which we call \textbf{SBCList++}. We then propose two efficient algorithms: a wedge-centric approach that enforces balance constraints during enumeration and a triplet-centric approach that directly enumerates feasible vertex triplets. Extensive experiments on large real-world datasets demonstrate that the triplet-centric algorithm significantly outperforms the baseline, achieving an average speedup of over $110\times$ compared to SBCList++.

% 
%A bipartite graph is the fundamental model for many complicated networks in real-world applications
Two disjoint sets of entities and their relationship can be modelled as a bipartite graph. Real-life examples include drug-target interaction in biological networks, user-item relationships in e-commerce networks, etc. Motif-based analysis is essential for understanding the structure of large-scale networks, and bipartite graphs are no exception. In contrast to unsigned graphs, motif analysis in signed bipartite graphs has received limited attention. The smallest non-trivial motif in a signed bipartite graph is a balanced $(2,2)$-biclique, often called a balanced butterfly, which captures only local patterns and cannot reveal higher-order relationships. Bipartite motifs have been studied in the literature in the context of signed bipartite graphs, such as maximal biclique, bitruss, and so on. None of these works addresses bipartite motifs with fixed-sized vertex sets, which are often relevant in practical situations. In this work, we study the balanced $(p,q)$-biclique counting problem for small values of $p$ and $q$. As a baseline, we first adapt and extend the state-of-the-art BCList++ algorithm for unsigned bipartite graphs to incorporate edge signs, which we call \textbf{SBCList++}. We then propose two efficient algorithms: \textbf{BBWC}, a wedge-centric approach that enforces balance constraints during enumeration, and \textbf{BBVP}, a vertex-based pruning approach that directly enumerates feasible vertex sets. Extensive experiments on large real-world datasets demonstrate that the vertex-based pruning algorithm, BBVP, significantly outperforms the baseline, achieving an average \textbf{speedup of $561\times$} over SBCList++.

% Signed bipartite graphs are fundamental models in many applications where relationships carry polarity, such as trust, biological, and recommendation networks. Motif-based analysis is essential for understanding the structure, yet in signed bipartite graphs, existing work mainly focuses on local patterns such as balanced butterflies or on 

\end{abstract}

\vspace{10.0cm}

\section{Introduction}
\label{sec:intro}

Bipartite graphs are commonly used in e-commerce user-product interactions, drug-target linkages, and author-paper networks because they naturally depict relationships between two distinct entity types. Formally, a bipartite graph $G(U, V, E)$ has two vertex sets, $U$ and $V$, with edges $e = (u,v)\in E$ connecting $u\in U$ to $v\in V$ (as illustrated in Fig.~\ref{fig:gray_vs_signed}(a)). Significant research interest has been generated by the mining of structural patterns in bipartite networks. Notable patterns include bicliques~\cite{yao2022identifying,chen2022efficient,lyu2020maximum}, bi-truss~\cite{zou2016bitruss,chen2021higher}, and bi-core~\cite{luo2023efficient}, among others. Among these cohesive structures, the \textit{butterfly}~\cite{aksoy2017measuring} (also known as a 4-cycle), i.e., a $2\times 2$ complete bipartite subgraph, is a fundamental motif in bipartite networks and serves as a basic building block of higher-order bipartite structures.

However, many real-world interactions are polarised, unlike classic bipartite graph models. For example, in a drug-target interaction network, a drug can activate or inhibit a biological target, causing positive or negative consequences. Signed interactions reflect the drug's method of action beyond connectivity and cannot be described by an unsigned bipartite graph. Recent works have examined maximal clique enumeration in signed unipartite graphs~\cite{sun2020discovering,chen2020efficient}. To improve the characterisation of cliques in signed unipartite graphs, current methodologies have utilised balancing theory~\cite{heider1946attitudes}. This has led to significant research on balanced triangles, which examines the coherence of positive and negative interactions. In this concept, a triangle is balanced if it has an even number of negative edges, which aligns with common social intuitions.

\begin{figure}[t]
    \centering
    \begin{tikzpicture}[scale=0.8]
        % Define node style
        \tikzstyle{blackdot} = [circle, fill=black, minimum size=0.2cm, inner sep=0pt]
        \tikzset{
            whitedot/.style={circle,   draw=black,
            fill=white, minimum size=0.2cm, inner sep=0pt},
        }
        % ----- Left biclique: All solid gray edges -----
        % Left side nodes (U)
        \node[blackdot, label=left:$u_1$] (u0) at (0,1.4) {};
        \node[blackdot, label=left:$u_2$] (u1) at (0,0.4) {};
        \node[blackdot, label=left:$u_3$] (u2) at (0,-0.6) {};
        \node[blackdot, label=left:$u_4$] (u3) at (0,-1.6) {};

        % Right side nodes (V)
        \node[whitedot, label=right:$v_1$] (v0) at (2,1.4) {};
        \node[whitedot, label=right:$v_2$] (v1) at (2,0.4) {};
        \node[whitedot, label=right:$v_3$] (v2) at (2,-0.6) {};
        \node[whitedot, label=right:$v_4$] (v3) at (2,-1.6) {};

        % Draw gray solid edges for full biclique
        \foreach \i in {0,1,2,3} {
            \foreach \j in {0,1,2,3} {
                \draw[line width=1pt, color=gray] (u\i) -- (v\j);
            }
        }

        % ----- Right bipartite graph: Signed structure -----
        % Left nodes (U)
        \node[blackdot, label=left:$u_1$] (bu0) at (5,1.4) {};
        \node[blackdot, label=left:$u_2$] (bu1) at (5,0.4) {};
        \node[blackdot, label=left:$u_3$] (bu2) at (5,-0.6) {};
        \node[blackdot, label=left:$u_4$] (bu3) at (5,-1.6) {};

        % Right nodes (V)
        \node[whitedot, label=right:$v_1$] (bv0) at (7,1.4) {};
        \node[whitedot, label=right:$v_2$] (bv1) at (7,0.4) {};
        \node[whitedot, label=right:$v_3$] (bv2) at (7,-0.6) {};
        \node[whitedot, label=right:$v_4$] (bv3) at (7,-1.6) {};

        % Edges for u1
        \draw[thick] (bu0) -- (bv0);
        \draw[thick] (bu0) -- (bv1);
        \draw[thick] (bu0) -- (bv2);
        \draw[thick] (bu0) -- (bv3);

        % Edges for u2
        \draw[thick] (bu1) -- (bv0);
        \draw[thick] (bu1) -- (bv1);
        \draw[thick] (bu1) -- (bv2);
        \draw[thick] (bu1) -- (bv3);

        % Edges for u3 (with negative)
        \draw[thick, dashed] (bu2) -- (bv0);
        \draw[thick] (bu2) -- (bv1);
        \draw[thick] (bu2) -- (bv2);
        \draw[thick] (bu2) -- (bv3);

        % Edges for u4 (with negative)
        \draw[thick, dashed] (bu3) -- (bv0);
        \draw[thick, dashed] (bu3) -- (bv1);
        \draw[thick] (bu3) -- (bv2);
        \draw[thick] (bu3) -- (bv3);

        % ----- Label (a) and (b) -----
        \node at (1, -2.2) {\textbf{(a)}};
        \node at (1, -2.7) {\textbf{(Bipartite graph)}};
        \node at (6, -2.2) {\textbf{(b)}};
         \node at (6, -2.7) {\textbf{(Signed bipartite graph)}};
    \end{tikzpicture}
    \caption{Example of bipartite graph (grey lines denote edges) and signed bipartite graph (solid/dashed edges denote positive/negative edges, respectively.)}
    \label{fig:gray_vs_signed}
\end{figure}

Existing clique techniques primarily focus on signed unipartite graphs and cannot be directly applied to signed bipartite graphs, since bipartite graphs contain no triangles and consist of two disjoint vertex sets. A signed bipartite graph is defined as $G=(U, V, E)$, where $U$ and $V$ are two partitions and each edge $e \in E$ carries either a positive (``+'') or negative (``-'') sign, as illustrated in Fig.~\ref{fig:gray_vs_signed}(b). In this context, Derr \emph{et al.}~\cite{derr2019balance} introduced signed butterflies, i.e., signed $(2,2)$-bicliques, and defined a butterfly as balanced if it contains an even number of negative edges. Examples of balanced and unbalanced butterflies are shown in Fig.~\ref{figure:bal_unbal}. Subsequent studies have used balanced butterfly-based analysis for tasks such as sign prediction and structural characterization in signed bipartite graphs~\cite{derr2019balance,chung2023maximum}. Despite their usefulness, butterflies capture only limited local structure and cannot describe higher-order structural relationships among more than two nodes within the same partition. In many graph-based tasks, biclique size is not restricted to ($2,2$), motivating the study of larger balanced bicliques. To capture higher-order structure in bipartite graphs, prior studies have explored alternative motifs such as bi-triangles and induced $6$-cycles~\cite{opsahl2013triadic,yang2021efficient,niu2025fast}. However, these motifs are cycle-based rather than clique-based and do not form complete bipartite subgraphs. Therefore, bi-triangles and induced $6$-cycles are fundamentally different from balanced bicliques.

\begin{figure}[ht]
    \centering
    \begin{tikzpicture}[line width=0.2mm][scale=0.5]
        \tikzset{
            blackdot/.style={circle, fill=black, minimum size=0.2cm, inner sep=0pt},
        }
        \tikzset{
            whitedot/.style={circle,   draw=black,
            fill=white, minimum size=0.2cm, inner sep=0pt},
        }

         \draw[dotted, thick] (-0.3, 1.3) rectangle (8.5, -1.4);

        % Butterfly 1 (a)
        \node[blackdot, label={[yshift=1pt]below:$u_2$}] (u0) at (0, 0) {};
        \node[blackdot, label={[yshift=-2pt]above:$u_1$}] (u1) at (0, 0.7) {};
        \node[whitedot, label={[yshift=1pt, xshift=2pt]below:$v_2$}] (v0) at (0.7, 0) {};
        \node[whitedot, label={[yshift=-2pt, xshift=2pt]above:$v_1$}] (v1) at (0.7, 0.7) {};
        \draw[thick] (u0) -- (v0);
        \draw[thick] (u0) -- (v1);
        \draw[thick] (u1) -- (v0);
        \draw[thick] (u1) -- (v1);
        \node at (0.35, -0.75) {(a)};

        % Butterfly 2 (b)
        \node[blackdot, label={[yshift=1pt]below:$u_2$}] (u2) at (1.2, 0) {};
        \node[blackdot, label={[yshift=-2pt]above:$u_1$}] (u3) at (1.2, 0.7) {};
        \node[whitedot, label={[yshift=1pt, xshift=2pt]below:$v_2$}] (v2) at (1.9, 0) {};
        \node[whitedot, label={[yshift=-2pt, xshift=2pt]above:$v_1$}] (v3) at (1.9, 0.7) {};
        \draw[thick] (u2) -- (v2);
        \draw[dashed, thick] (u2) -- (v3);
        \draw[dashed, thick] (u3) -- (v2);
        \draw[thick] (u3) -- (v3);
        \node at (1.55, -0.75) {(b)};

        % Butterfly 3 (c)
        \node[blackdot, label={[yshift=1pt]below:$u_2$}] (u4) at (2.4, 0) {};
        \node[blackdot, label={[yshift=-2pt]above:$u_1$}] (u5) at (2.4, 0.7) {};
        \node[whitedot, label={[yshift=1pt, xshift=2pt]below:$v_2$}] (v4) at (3.1, 0) {};
        \node[whitedot, label={[yshift=-2pt, xshift=2pt]above:$v_1$}] (v5) at (3.1, 0.7) {};
        \draw[dashed] (u4) -- (v4);
        \draw[dashed] (u4) -- (v5);
        \draw[thick] (u5) -- (v4);
        \draw[thick] (u5) -- (v5);
        \node at (2.75, -0.75) {(c)};

        % Butterfly 4 (d)
        \node[blackdot, label={[yshift=1pt]below:$u_2$}] (u6) at (3.6, 0) {};
        \node[blackdot, label={[yshift=-2pt]above:$u_1$}] (u7) at (3.6, 0.7) {};
        \node[whitedot, label={[yshift=1pt, xshift=2pt]below:$v_2$}] (v6) at (4.3, 0) {};
        \node[whitedot, label={[yshift=-2pt, xshift=2pt]above:$v_1$}] (v7) at (4.3, 0.7) {};
        \draw[dashed] (u6) -- (v6);
        \draw[thick] (u6) -- (v7);
        \draw[dashed, thick] (u7) -- (v6);
        \draw[thick] (u7) -- (v7);
        \node at (3.95, -0.75) {(d)};

        % Butterfly 5 (e)
        \node[blackdot, label={[yshift=1pt]below:$u_2$}] (u8) at (4.8, 0) {};
        \node[blackdot, label={[yshift=-2pt]above:$u_1$}] (u9) at (4.8, 0.7) {};
        \node[whitedot, label={[yshift=1pt, xshift=2pt]below:$v_2$}] (v8) at (5.5, 0) {};
        \node[whitedot, label={[yshift=-2pt, xshift=2pt]above:$v_1$}] (v9) at (5.5, 0.7) {};
        \draw[dashed, thick] (u8) -- (v8);
        \draw[dashed] (u8) -- (v9);
        \draw[dashed, thick] (u9) -- (v8);
        \draw[dashed, thick] (u9) -- (v9);
        \node at (5.15, -0.75) {(e)};

        % Extra spacing between (e) and (f)
        % Butterfly 6 (f)
        \node[blackdot, label={[yshift=1pt]below:$u_2$}] (u10) at (6.2, 0) {};
        \node[blackdot, label={[yshift=-2pt]above:$u_1$}] (u11) at (6.2, 0.7) {};
        \node[whitedot, label={[yshift=2pt, xshift=1pt]below:$v_2$}] (v10) at (6.9, 0) {};
        \node[whitedot, label={[yshift=-2pt, xshift=2pt]above:$v_1$}] (v11) at (6.9, 0.7) {};
        \draw[dashed, thick] (u10) -- (v10);
        \draw[thick] (u10) -- (v11);
        \draw[thick] (u11) -- (v10);
        \draw[thick] (u11) -- (v11);
        \node at (6.55, -0.75) {(f)};

        % Butterfly 7 (g)
        \node[blackdot, label={[yshift=1pt]below:$u_2$}] (u12) at (7.4, 0) {};
        \node[blackdot, label={[yshift=-2pt]above:$u_1$}] (u13) at (7.4, 0.7) {};
        \node[whitedot, label={[yshift=1pt, xshift=2pt]below:$v_2$}] (v12) at (8.1, 0) {};
        \node[whitedot, label={[yshift=-2pt, xshift=2pt]above:$v_1$}] (v13) at (8.1, 0.7) {};
        \draw[dashed] (u12) -- (v12);
        \draw[dashed] (u12) -- (v13);
        \draw[dashed] (u13) -- (v12);
        \draw[thick] (u13) -- (v13);
        \node at (7.75, -0.75) {(g)};

        \node at (2.7, -1.15) {\textbf{Balanced}};
        \node at (7.0, -1.15) {\textbf{Unbalanced}};

\draw[dotted, thick] (5.8, 1.3) -- (5.8, -1.4);

    \end{tikzpicture}
    \caption{Illustration of balanced and unbalanced butterflies (solid/dashed edges denote positive/negative edges).}
    \label{figure:bal_unbal}
\end{figure}

In signed bipartite networks, an important cohesive structure is the \emph{balanced signed biclique}, which extends the concept of balanced butterflies to larger and more complex patterns. Several related structures have been explored, including (maximal) signed bicliques~\cite{sun2023efficient,wang2024efficient,sun2022maximal} and signed bitrusses~\cite{chung2023maximum}, which aim to identify dense or balanced regions for different analytical goals. For example, Sun et al.~\cite{sun2022maximal} developed algorithms for finding maximal balanced bicliques under size constraints $(\tau_u, \tau_v)$, ensuring that each biclique has at least $\tau_u$ vertices on one side and $\tau_v$ vertices on the other side. However, these algorithms are not directly applicable to counting fixed-size balanced bicliques, as we need to discard a large number of bicliques in the post-processing step, which is computationally inefficient. To address these limitations, we study the problem of counting balanced bicliques of size $(p,q)$, where $p$ and $q$ are the sizes of the two partitions. Recently, Kiran et al.~\cite{kiran2024efficient} proposed an algorithm for counting balanced $(2,k)$-bicliques. In the $(2,k)$ case, all butterflies share the same vertex pair, so sign dependencies remain localized. In contrast, a $(p,q)$-biclique contains butterflies formed from many vertex combinations on both sides, and each edge can participate in multiple butterflies. As $p$ and $q$ increase, the number of butterflies grows combinatorially, and the number of interacting sign patterns increases rapidly. Consequently, maintaining the balanced biclique property becomes significantly more challenging, since every butterfly in the structure must satisfy the balance condition simultaneously, requiring global sign consistency across the entire biclique. Beyond structural analysis, balanced $(p,q)$-bicliques have strong practical significance. For instance, in signed drug-target interaction networks, drugs may activate or inhibit biological targets~\cite{torres2016drug}. A fixed balanced $(p,q)$-biclique represents a group of $p$ drugs that exhibit consistent regulatory behavior across $q$ shared targets, revealing coherent multi-drug functional modules of controlled, interpretable size. Such fixed-sized patterns allow systematic comparison of interaction modules across datasets and help identify drug groups with similar mechanisms of action. These structures are particularly valuable in the design of combination therapies.

\subsection{Challenges and Contributions}

The problem of counting balanced \((p, q)\)-bicliques in a signed bipartite graph is computationally challenging, as it requires handling edge signs and incorporating balance theory to ensure that the \((p, q)\)-biclique is stable and free from unbalanced butterflies. Given a bipartite graph $G=(U, V, E)$, a straightforward approach shows that there can be at most ${m\choose p}{n\choose q}$ distinct $(p,q)$-bicliques. Thus, exhaustive counting has a worst-case complexity of approximately $O(m^pn^q)$, which is significantly higher than that of butterfly counting~\cite{wang2019vertex}, where $m = |U|$ and $n = |V|$. Sun et al.~\cite{sun2022maximal} proposed an algorithm for finding balanced bicliques in signed bipartite graphs under size constraints $(\tau_u, \tau_v)$. However, to count balanced bicliques of exact size $(\tau_u, \tau_v)$, one must generate many larger bicliques and filter them, leading to inefficiency. Yang et al.~\cite{yang2023p} designed the BCList++ algorithm to count all $(p,q)$-bicliques, but their method does not consider edge signs. To find balanced bicliques, one would need to enumerate all fixed-size bicliques with BCList++ and then filter out the unbalanced ones. In both approaches, the need to filter a potentially large number of irrelevant bicliques makes them impractical for efficient fixed-size balanced biclique counting. Motivated by these difficulties, we develop algorithms that utilise structural properties of signed bipartite graphs to enable efficient and exact counting. We summarize our main contributions as follows:

\begin{itemize}
    \item We formulate and study the problem of counting balanced $(p,q)$-bicliques in signed bipartite graphs, capturing global balance patterns.
    \item We extend the BCList++ algorithm to signed bipartite graphs, resulting in \textbf{SBCList++} for efficient balanced $(p,q)$-biclique counting.
    \item We propose a wedge-centric algorithm, \textbf{BBWC}, which groups signed wedge patterns by edge-sign configurations and counts only balanced $(p,q)$-bicliques.
    % \item We design a vertex-based pruning algorithm, \textbf{BBVP}, that anchors enumeration on a fixed vertex and reduces redundant balance checks via combinatorial aggregation.
    \item We design a vertex-based pruning algorithm, \textbf{BBVP}, that anchors enumeration at a fixed vertex, prunes infeasible candidates early, and avoids redundant wedge enumerations.
    
    \item We empirically evaluate our algorithms and demonstrate the efficiency and scalability of \textbf{BBWC} and \textbf{BBVP} over baseline algorithms~\footnote{\textit{Source code will be released upon acceptance of the paper}}.
    % \item We develop multi-core versions of both algorithms, \textbf{M-BB3C} and \textbf{MT-BB3C}, and demonstrate their efficiency and scalability on real-world signed bipartite graphs.

        %Note:
%\textit{The implementations of all proposed algorithms will be released as open-source\footnote{\textit{Upon acceptance of the paper}}.}
\end{itemize}

\textbf{Organization of the Paper:}
The remainder of the paper is organized as follows. 
Section~\ref{Sec:rel} reviews related work on signed bipartite graphs. 
Section~\ref{Sec:prob_stmt} presents preliminaries and formally defines the problem. 
Section~\ref{Sec:alg} describes the proposed algorithms for balanced $(p,q)$-biclique counting. 
Section~\ref{Sec:EE} reports the experimental evaluation. 
Section~\ref{Sec:Con} concludes the paper and outlines future research directions.
\section{Related Work}
\label{Sec:rel}
Structural pattern mining in graphs has been extensively studied in large-scale networks, with cliques and bicliques being two fundamental cohesive subgraph structures. While maximal clique enumeration in unipartite graphs has a long history of research, in this literature, we focus on bicliques, as they are closest to the theme of our study.

% we elaborate clique as this is closest to the theme of our study.
In bipartite graphs, a fundamental line of work focuses on butterfly ($(2,2)$-biclique) counting, which serves as a basic building block for higher-order analysis. Numerous efficient algorithms have been proposed for counting butterflies in large bipartite graphs under different computational settings, including sequential, parallel, distributed, and GPU-based frameworks~\cite{wang2014rectangle,sanei2018butterfly,shi2020parallel,xia2024gpu}.
However, these methods are primarily designed for unsigned graphs and capture only local structural patterns. General $(p,q)$-biclique enumeration has also been studied in unsigned bipartite graphs~\cite{yang2023p}, as well as in the context of densest subgraph discovery~\cite {mitzenmacher2015scalable}. Since these methods do not account for edge signs and are enumeration-based, they are not well-suited to efficiently count fixed-size balanced bicliques in signed bipartite graphs. Beyond bicliques, alternative bipartite motifs such as bi-triangles and induced $6$-cycles have been studied to model higher-order connectivity and clustering~\cite{opsahl2013triadic,yang2021efficient,niu2025fast}. These motifs are cycle-based structures and differ fundamentally from bicliques, as they do not form complete bipartite subgraphs and are not designed to capture balance consistency in signed networks.

Several studies have investigated cohesive substructures in signed bipartite graphs. Derr~\emph{et al.}~\cite{derr2019balance} introduced the concept of balanced butterflies and demonstrated their effectiveness in analyzing balance properties in signed bipartite networks. While useful, these approaches remain limited to butterfly-level structures and do not capture higher-order balance consistency. Existing work includes enumeration of maximal signed bicliques~\cite{sun2023efficient,wang2024efficient} and signed bitrusses~\cite{chung2023maximum}, which aim to identify balanced regions under size or density constraints. Recent work has studied balanced $(2,k)$-bicliques in signed bipartite graphs by fixing one side to size two and enforcing balance over butterfly-centred structures~\cite{kiran2024efficient}. While effective for capturing local balance patterns, these methods rely on fixed vertex pairs. To the best of our knowledge, efficient algorithms for counting \emph{balanced} $(p,q)$-bicliques in signed bipartite graphs have not been studied.

% In contrast, our work focuses on counting balanced $(3,3)$-bicliques, which extend balanced butterflies to the smallest dense structure capable of capturing global balance consistency in signed bipartite graphs.

\section{Preliminaries and Problem Definition} 
\label{Sec:prob_stmt}
We consider an undirected signed bipartite graph $G = (U, V, E)$, where $U$ and $V$ are the bipartitions and $E \subseteq U \times V$ represents the edge set. Each edge $(u,v)\in E$ is associated with a sign (label), either ``$+$'' or ``$-$''. The edge with a ``$+$'' (``$-$'') sign denotes the positive (negative) edge, e.g., like (dislike), trust (distrust), etc. For a vertex $u\in U$, let $deg(u)$ denote the degree of $u$ and $\Gamma(u)$ denote the neighbors of $u$ in $G$. Now, we define some basic notions for our problem.

\begin{definition}[\textbf{Butterfly}~\cite{sun2022maximal}]
    Given a bipartite graph $G=(U,V,E)$, a butterfly is a cycle of length $4$ induced by $4$ nodes $(u_i, u_j, v_i, v_j)$, where $u_i, u_j\in U$ and $v_i, v_j\in V$, such that $u_i$ and $u_j$ are all connected to $v_i$ and $v_j$ in $G$. A \textbf{signed butterfly} in a signed bipartite graph is a butterfly where all the edges are either positive or negative.
\label{def:butterfly}
\end{definition}

\begin{definition}[\textbf{Balanced biclique}]
Given a signed bipartite graph $G = (U,V,E)$, a balanced biclique is an induced subgraph $B$ of $G$ that satisfies:
\begin{itemize}
    \item \textbf{Cohesiveness constraint:}~$B$ is a signed biclique of $G$;
    \item \textbf{Balance constraint:}~$B$ does not contain any unbalanced butterfly.
\end{itemize}
\label{def: bsb}
\end{definition}

\begin{definition}[\textbf{$(p,q)$-Biclique}~\cite{yang2023p}]~Given a bipartite graph $G=(U,V,E)$, and two integer parameters $p$ and $q$, a $(p, q)$-biclique $B(L,R)$ is a biclique of $G$ with $|L|=p$ and $|R|=q$.
\end{definition}

\begin{definition}[\textbf{Balanced $(p,q)$-Biclique}]~A $(p,q)$-biclique in a signed bipartite graph is balanced if there are no unbalanced butterflies in it.
\label{bal-2-3-bcl}
\end{definition}

Now, we are ready to define the problem statement as follows.

\textbf{Problem Statement:} 
Given a signed bipartite graph $G=(U, V, E)$ and two positive integers $p$ and $q$, we study the problem of counting balanced $(p,q)$-bicliques present in $G$, where $p$ vertices are selected from one partition and $q$ from the other. When \( p = 1 \) or \( q = 1 \), the resulting structures degenerate into star-shaped patterns that do not reflect cohesive biclique structures. Therefore, we focus on the case where \( p > 1 \) and \( q > 1 \).

\section{Proposed Methodologies}
\label{Sec:alg}
In this section, we describe our algorithms for counting balanced $(p,q)$-bicliques in signed bipartite graphs.

\subsection{Baseline}
\label{sec:baseline}

A straightforward approach to counting balanced $(p,q)$-bicliques is to first enumerate all \pqbc and then filter out those that do not satisfy the balance criterion. A baseline for this purpose is the state-of-the-art algorithm BCList++~\cite{yang2023p}, which was originally designed for counting and enumerating \pqbc in unsigned bipartite graphs. We extend BCList++ to support signed bipartite graphs and refer to the enhanced version as \textbf{SBCList++}. Although this approach correctly counts all balanced bicliques, it incurs substantial redundant computation by generating many unbalanced bicliques that are later discarded. For example, if the signed bipartite graph contains no balanced bicliques, the algorithm still exhaustively enumerates candidate bicliques before filtering them, leading to significant overhead. Therefore, the baseline approach is inefficient. To address this drawback, we design more efficient algorithms that incorporate edge-sign information directly into the counting process, thereby avoiding the need to enumerate unbalanced structures.

\subsection{Wedge-Centric Algorithm}
\label{sec:wedge}
In this section, we design an efficient algorithm for counting balanced \((p, q)\)-bicliques. The high-level idea of our algorithm is to group the \spwedge patterns, respecting the sign of the edges, into appropriate buckets and systematically combine the \spwedge patterns to count only balanced $(p, q)$-bicliques.

\begin{definition}[\textbf{Vertex priority}~\cite{wang2019vertex}]
For any two vertices $a,b \in U \cup V$, the vertex priority $\rho(\cdot)$ is
defined such that $\rho(a)>\rho(b)$ if $deg(a)>deg(b)$, or if
$deg(a)=deg(b)$ and $id(a)>id(b)$, where $id(\cdot)$ denotes the vertex ID.
\label{def-priority}
\end{definition}

\begin{definition}[\textbf{$p$-star}]
Let $G=(U, V, E)$ be a bipartite graph. A \textbf{$p$-star} is a subgraph induced by
$p+1$ vertices consisting of a center vertex adjacent to $p$ distinct vertices,
where the center has degree $p$, and each of the remaining vertices has degree
$1$ in the subgraph.
\end{definition}

\begin{definition}[\textbf{$p$-wedge}]
\label{def:p-wedge}
Given a bipartite graph $G=(U,V,E)$, a \textbf{$p$-wedge} is a $(p+1)$-tuple
$\langle u_1, w_1, w_2, \ldots, w_{p-1}, v_1\rangle$ such that
$u_1, w_1, w_2, \ldots, w_{p-1}$ belong to the same partition,
$v_1$ belongs to the opposite partition, and the vertices
$\{u_1, w_1, \ldots, w_{p-1}\}$ together with center vertex $v_1$
induce a $p$-star with $v_1$ as the center. In a signed bipartite graph, a \textbf{signed $p$-wedge} is a $p$-wedge in which
each of the $p$ incident edges $(v_1,u_1)$ and $(v_1,w_i)$ carries a sign
(positive or negative). An example of a signed $p$-wedge is shown in
Figure~\ref{fig:wedge_structures} (a).
\end{definition}

\begin{figure}[t]
\centering

\begin{subfigure}{0.30\linewidth}
\centering
\begin{tikzpicture}[scale=0.9]
\tikzstyle{blackdot} = [circle, fill=black, minimum size=0.22cm, inner sep=0pt]
\tikzset{whitedot/.style={circle, draw=black, fill=white, minimum size=0.2cm, inner sep=0pt}}

\node[whitedot, label=right:$v_1$] (v)  at (2.2, 0) {};
\node[blackdot, label=left:$u_1$] (u)  at (1.0, 1.0) {};
\node[blackdot, label=left:$w_1$] (w1) at (1.0, 0.2) {};
\node[blackdot, label=left:$w_2$] (w2) at (1.0, -0.6) {};
\node at (1.0, -1.1) {$\vdots$};
\node[blackdot, label=left:$w_{p-1}$] (wp) at (1.0, -1.9) {};

\draw[line width=0.8pt] (v) -- (u);
\draw[line width=0.8pt,dashed] (v) -- (w1);
\draw[line width=0.8pt] (v) -- (w2);
\draw[line width=0.8pt,dashed] (v) -- (wp);
\end{tikzpicture}
\caption{Signed $p$-wedge}
\end{subfigure}
\hspace{-0.5cm}
\begin{subfigure}{0.65\linewidth}
\centering
\begin{tikzpicture}[scale=0.85]
\tikzstyle{blackdot} = [circle, fill=black, minimum size=0.2cm, inner sep=0pt]
\tikzset{whitedot/.style={circle, draw=black, fill=white, minimum size=0.2cm, inner sep=0pt}}

% ---------- (a) ss ----------
\node[whitedot, label=below:$v_1$] (v1a) at (-2, 0) {};
\node[blackdot, label=left:$u_1$] (u1a) at (-3, 0.9) {};
\node[blackdot, label=left:$w_1$] (w1a) at (-3, 0) {};
\node[blackdot, label=left:$w_2$] (w2a) at (-3, -0.9) {};
\draw[line width=0.8pt] (v1a) -- (u1a);
\draw[line width=0.8pt] (v1a) -- (w1a);
\draw[line width=0.8pt] (v1a) -- (w2a);
\node at (-2.4, -2.0) {(a) $\mathbf{x}^3_{ss}$};

% ---------- (b) dd ----------
\node[whitedot, label=below:$v_1$] (v1b) at (0.5, 0) {};
\node[blackdot, label=left:$u_1$] (u1b) at (-0.5, 0.9) {};
\node[blackdot, label=left:$w_1$] (w1b) at (-0.5, 0) {};
\node[blackdot, label=left:$w_2$] (w2b) at (-0.5, -0.9) {};
\draw[line width=0.8pt] (v1b) -- (u1b);
\draw[line width=0.8pt,dashed] (v1b) -- (w1b);
\draw[line width=0.8pt,dashed] (v1b) -- (w2b);
\node at (0.1, -2.0) {(b) $\mathbf{x}^3_{dd}$};

% ---------- (c) sd ----------
\node[whitedot, label=below:$v_1$] (v1c) at (3, 0) {};
\node[blackdot, label=left:$u_1$] (u1c) at (2, 0.9) {};
\node[blackdot, label=left:$w_1$] (w1c) at (2, 0) {};
\node[blackdot, label=left:$w_2$] (w2c) at (2, -0.9) {};
\draw[line width=0.8pt] (v1c) -- (u1c);
\draw[line width=0.8pt] (v1c) -- (w1c);
\draw[line width=0.8pt,dashed] (v1c) -- (w2c);
\node at (2.6, -2.0) {(c) $\mathbf{x}^3_{sd}$};

% ---------- (d) ds ----------
\node[whitedot, label=below:$v_1$] (v1d) at (5.5, 0) {};
\node[blackdot, label=left:$u_1$] (u1d) at (4.5, 0.9) {};
\node[blackdot, label=left:$w_1$] (w1d) at (4.5, 0) {};
\node[blackdot, label=left:$w_2$] (w2d) at (4.5, -0.9) {};
\draw[line width=0.8pt] (v1d) -- (u1d);
\draw[line width=0.8pt,dashed] (v1d) -- (w1d);
\draw[line width=0.8pt] (v1d) -- (w2d);
\node at (5.1, -2.0) {(d) $\mathbf{x}^3_{ds}$};

\end{tikzpicture}

\caption{Signed $3$-wedge types}
\end{subfigure}

\caption{An example of \pwedge and signed 3-wedge with different configurations.}
\label{fig:wedge_structures}
\end{figure}

In a signed bipartite graph, we will have different types of \spwedge{s} respecting the ordering of the vertices and the sign of the edges. For any \spwedge $\omega = \langle u_1, w_1, w_2, ..., w_{p-1}, v_1\rangle$, we assume the ordering $u_1, w_1, w_2, ..., w_{p-1}$ such that $\rho(u_1)>\rho(w_1)>\rho(w_2)>\cdots>\rho(w_{p-1})$. Next, we consider the sign on the edges of a \spwedge and define the type of a \spwedge $\omega$ as $\mathbf{x}^p_{\kappa_j}$ where $\kappa_j = c_1c_2...c_{p-1}$ with each $c_i$ is either $s$ or $d$. For an index $i$, $c_i = s$ if the sign of edge $(u_1, v_1)$ is same as the sign of the edge $(v_1, w_i)$, and $d$ otherwise. For example, Figure.~\ref{fig:wedge_structures}(b) shows different types of \stredge{s}.

\begin{lemma}\label{lemma-1}
    All \spwedge{s} in a balanced $(p, q)$biclique are of the same type.
\end{lemma}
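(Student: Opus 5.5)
Fix a balanced $(p,q)$-biclique $B(L,R)$ with $L=\{a_1,\dots,a_p\}$ ordered so that $\rho(a_1)>\rho(a_2)>\cdots>\rho(a_p)$. Every signed $p$-wedge inside $B$ has the form $\langle a_1,a_2,\dots,a_p,v\rangle$ for some centre $v\in R$, because its $p$ leaves must exhaust $L$ and are listed in decreasing priority. Hence $u_1=a_1$ and $w_i=a_{i+1}$ for every such wedge, and the wedges differ only in the centre $v$. So it suffices to show that the type string $\kappa(v)=c_1\cdots c_{p-1}$ does not depend on $v\in R$. If the sides are read the other way round, with wedges centred in $L$ and leaves in $R$, the same argument applies with the roles of $L$ and $R$ swapped.

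Write $\sigma(x,y)\in\{+1,-1\}$ for the sign of edge $(x,y)$. By definition $c_i=s$ exactly when $\sigma(a_1,v)\,\sigma(a_{i+1},v)=+1$. Take two centres $v,v'\in R$ and an index $i$. The four vertices $a_1,a_{i+1},v,v'$ induce a butterfly in $B$, since $B$ is a biclique. This butterfly is balanced, since $B$ contains no unbalanced butterfly. Balance means an even number of negative edges, which is the condition
\[
\sigma(a_1,v)\,\sigma(a_{i+1},v)\,\sigma(a_1,v')\,\sigma(a_{i+1},v')=+1 .
\]
It follows that $\sigma(a_1,v)\sigma(a_{i+1},v)=\sigma(a_1,v')\sigma(a_{i+1},v')$, so the $i$-th letter of $\kappa(v)$ equals the $i$-th letter of $\kappa(v')$. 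This holds for every $i\in\{1,\dots,p-1\}$ and every pair $v,v'$, so all wedges share one type $\mathbf{x}^p_{\kappa}$.

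The main obstacle is interpretive rather than technical. The definition of a signed butterfly in Definition~\ref{def:butterfly} is loosely worded, so I must commit to the meaning of "balanced" from the introduction and Fig.~\ref{figure:bal_unbal}: an even number of negative edges. I also need the canonical priority ordering of the leaves to be what pins down which vertex plays $u_1$. Without that ordering, reordering the leaves of the same star could produce a different type string. Once both points are fixed, the proof is the single butterfly-parity step above.
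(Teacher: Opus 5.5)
Your proof is correct and takes essentially the paper's approach. The paper argues by contradiction: if two wedges $\langle u,w_1,\dots,w_{p-1},v\rangle$ and $\langle u,w_1,\dots,w_{p-1},v'\rangle$ differed at letter $i$, the butterfly $(u,w_i,v,v')$ would be unbalanced. That is exactly your parity identity read contrapositively, and you additionally justify the shared ordered leaf tuple (via the priority ordering and the fact that the $p$ leaves must exhaust $L$), which the paper simply assumes.

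One caveat concerns your aside about reading the sides the other way round. It is only valid if the wedge size is swapped as well, i.e.\ $q$-wedges whose leaves exhaust $R$. For $p$-wedges centred in $L$ with $q>p$, the leaf sets vary and the types can genuinely differ; for example, take $p=2$, $q=3$, with both edges to one vertex of $R$ negative and all other edges positive. That case is never needed, however, because the paper's wedges always have their $p$ leaves on the anchor side.
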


\begin{proof}
    % We will prove this by contradiction. Suppose there is a balanced $(p, q)$-biclique $b$ where there are two different types of \pwedge $\mathbf{x}^p_{\kappa} = \langle u, w_1, w_2, ..., w_{p-1}, v\rangle$ and $\mathbf{x}^p_{\kappa'} = \langle u, w_1, w_2, ..., w_{p-1}, v'\rangle$. Suppose in the representations of $\kappa = c_1c_2...c_{p-1}$ and $\kappa'=c_1'c_2'...c_{p-1}'$, $i$ is the first index where $c_i\neq c_i'$. Without loss of generality, assume that $c_i=s$ and $c_i'=d$. Then, the butterfly $(u, w_i, v, v')$ is an unbalanced butterfly. This is a contradiction.

 The detailed proof is given in Appendix~\ref{app:spwedge-type-proof}.

% \hfill $\square$
\end{proof}

 \textbf{Algorithm:} The algorithm first selects the smaller partition between $U$ and $V$ as the anchor set $S$. For each vertex $u\in S$,
 we enumerate signed $p$-wedges $\omega = \langle u,w_1, w_2, \dots, w_{p-1},v \rangle$, where $w_1, \dots, w_{p-1} \in S$ and $v$ belongs to the opposite direction. According to Lemma~\ref{lemma-1}, each signed $p$-wedge can be classified into one of $2^{p-1}$ types based on the sign pattern of the $p-1$ edges incident to $V$ (each being of $s$ or $d$). We maintain $2^{p-1}$ buckets, where bucket $B_i$ corresponds to wedge $\mathbf{x}^p_{\kappa_i}$. Each bucket stores a mapping from a wedge $\omega$ (defined by its anchor-side vertices $\{u,w_1, \dots, w_{p-1}\}$) to the number of vertices in the opposite partition that are adjacent to all vertices of $\omega$. In other words, $B_i[\omega]$ counts the number of common neighbors that complete $\omega$ into a $(p,q)$-biclique. After processing all wedges rooted at $u$, for each wedge $\omega$ stored in bucket $B-i$, we add $\binom{B_i[\omega]}{q}$ to the total count, since any choice of $q$ common neighbors forms a balanced $(p,q)$-biclique. Summing over all buckets yields the total number of balanced $(p,q)$-bicliques. A detailed pseudocode of this procedure is presented in Algorithm~\ref{algo:bucketpq}.

\begin{algorithm}[t]
\small
\DontPrintSemicolon
	\SetKwInOut{Input}{Input}
	\SetKwInOut{Output}{Output}
	\Input{$G$$(U,V,E)$: signed bipartite graph.}
	\Output{$b_{pq}$: number of balanced  $(p, q)$ bicliques.}
        calculate $\rho(u)$ for each $u\in U$  [\textbf{Definition}~\ref{def-priority}]\\
        $b_{pq}\gets 0$\;
        Select the smaller partition $S\in\{U,V\}$\;     
       \ForEach{$u\in S$}{
         $B_1 , B_2, ..., B_{2^{p-1}} \gets \phi$\;
        \ForEach{$v \in \Gamma(u)$}{
       \ForEach{$(w_1, w_2, ..., w_{p-1}) \in \Gamma(v) \mbox{ such that } \rho(u) > \rho(w_1) >\cdots>\rho(w_{p-1})$}{
                \lIf{$\omega = \langle u, w_1, w_2, ..., w_{p-1}, v\rangle$ is of type $\mathbf{x}^p_{\kappa_i}$}{
                    $B_i[\omega]++$
                }
                
        }
        }
        \ForEach{$i\in [1, ..., 2^{p-1}]$ such that $B_i$ is non-empty}{
            \ForEach{$\omega\in B_i$}{
                $b_{pq} \gets b_{pq} + \binom{B_i[\omega]}{q}$    
            }
        }

        }
 % }

\caption{\textbf{BBWC}: Wedge-Centric Balanced $(p,q)$-biclique Counting}
\label{algo:bucketpq}
\end{algorithm}

\begin{theorem}
    The worst case time complexity of Algorithm~\ref{algo:bucketpq} for counting \bpqbc in a signed bipartite graph $G = (U, V, E)$ with $|U|=m, |V|=n$ is $O(m\Delta\binom{\Delta}{p})$ where $\Delta$ is the maximum degree of $G$.
\end{theorem}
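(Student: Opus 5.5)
We bound the cost of Algorithm~\ref{algo:bucketpq} loop by loop, charging each unit of work to the outer iteration over $u\in S$ and then summing over $S$.

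\textbf{Preprocessing.} Computing the priorities $\rho(\cdot)$ takes $O(m+n)$ time. Choosing the smaller partition $S$ takes $O(1)$ time given $m,n$. Since $|S|\le\min(m,n)\le m$, the outer loop runs at most $m$ times. These costs are dominated by the main bound.

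\textbf{Cost of one outer iteration.} Fix $u\in S$. The middle loop ranges over $v\in\Gamma(u)$, so it runs at most $deg(u)\le\Delta$ times. For a fixed $v$, the inner loop ranges over the $(p-1)$-subsets $\{w_1,\dots,w_{p-1}\}\subseteq\Gamma(v)$ whose elements have priority below $\rho(u)$, listed in decreasing order. There are at most $\binom{deg(v)}{p-1}\le\binom{\Delta}{p-1}$ such subsets. Each can be generated in amortized $O(1)$ time by a standard combination enumeration over $\Gamma(v)$ pre-sorted by $\rho$.

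For each generated wedge $\omega$ we do two things:
\begin{itemize}
\item determine its type $\kappa_i$ by comparing the signs of $(v,w_j)$ with that of $(u,v)$;
\item increment $B_i[\omega]$ in a hash map.
\end{itemize}
Since $p$ is a fixed small constant, both steps cost $O(p)=O(1)$ expected time. Hence one iteration of the middle loop costs $O(\binom{\Delta}{p-1})$, and the counting phase for $u$ costs $O(\Delta\binom{\Delta}{p-1})$.

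The final summation over the buckets touches only keys that were inserted. Their number is at most the number of increments, so this phase is also $O(\Delta\binom{\Delta}{p-1})$, with each $\binom{B_i[\omega]}{q}$ computed in $O(q)=O(1)$ time. Reinitializing the buckets costs no more than this, provided we clear only the inserted keys.

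\textbf{Summing over $u$.} Over all $u\in S$ the total is $O(m\Delta\binom{\Delta}{p-1})$. To match the stated form, observe that for $\Delta\ge p$ we have
\[
\binom{\Delta}{p-1}=\frac{p}{\Delta-p+1}\binom{\Delta}{p}\le p\binom{\Delta}{p}.
\]
With $p$ constant, $\binom{\Delta}{p-1}=O(\binom{\Delta}{p})$. When $\Delta<p$ no $p$-wedge exists and the work is $O(m\Delta)$. This gives the claimed bound $O(m\Delta\binom{\Delta}{p})$.

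\textbf{Main obstacle.} The step needing most care is the amortization of the inner enumeration and of the bucket bookkeeping. Concretely, one must ensure two things: enumerating subsets restricted by $\rho(w_j)<\rho(u)$ does not cost more than the number of subsets produced, and clearing and scanning the $2^{p-1}$ buckets costs only $O(2^{p-1})=O(1)$ per $u$ beyond the inserted keys. Both rely on treating $p$ and $q$ as constants, which is consistent with the paper's focus on small $(p,q)$.
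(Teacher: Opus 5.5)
Your proof is correct and follows essentially the same accounting as the paper's. For each anchor $u$ you bound the wedge-enumeration phase by at most $\Delta$ neighbours times a binomial number of subsets, bound the bucket-scan phase by the number of insertions, and multiply by $|S|\le m$. Your per-neighbour count $\binom{\Delta}{p-1}$ is more precise than the paper's $\binom{\Delta}{p}$: it gives the sharper $O(m\Delta\binom{\Delta}{p-1})$ before you relax it via $\binom{\Delta}{p-1}\le p\binom{\Delta}{p}$. Two loose ends remain, and the paper's proof glosses over both as well. Hashing gives an expected rather than worst-case bound. When $\Delta<p$ we have $\binom{\Delta}{p}=0$, so your $O(m\Delta)$ work does not actually meet the stated $O(\cdot)$ bound, contrary to your ``This gives the claimed bound''.
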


\begin{proof}
% The time complexity has two parts. First, we compute the complexity of counting the number of vertices connected to each \spwedge in the respective bucket. For this, we construct a vertex subset of size $p$ within the $2$-hop neighborhood of a vertex $u$. The worst-case complexity of this counting is $\Delta\times\binom{\Delta}{p}$ as $\Delta$ is the maximum degree in the graph. For choosing a specific \spwedge respecting the vertex order, we can sort the vertices using counting sort in time $O(m)$. In the second phase of the algorithm, we count the number of \bpqbc using the information of the number of common vertices for each \spwedge (Lines $8$-$10$). For this, we require $\Delta\times\binom{\Delta}{p}$ entries across the buckets, since we have processed these many \spwedge{s} in the first phase of the algorithm. Thus, overall time complexity becomes $O(m\Delta\binom{\Delta}{p})$.
A detailed proof is provided in Appendix~\ref{app:complexity-proof}
 
\end{proof}

%\subsection{Triplet-Centric Algorithm}
\subsection{Optimization: vertex-based pruning}
\label{sec:triplet}
% Although the wedge-centric algorithm correctly and systematically counts balanced $(p,q)$-bicliques, its signed $p$-wedge enumeration process can repeatedly generate the same anchor-side vertex groups via distinct neighbors, thereby increasing computational cost. To improve efficiency, we further develop a triplet-centric approach that first identifies candidate vertex sets with sufficient shared neighbors and then directly enumerates higher-order structures from these sets. This avoids repeated intermediate expansions while preserving correctness. A detailed illustration is provided in Appendix~\ref{appndx:wedge-triplet}.

Although the wedge-centric algorithm correctly counts balanced $(p,q)$-bicliques, it is computationally expensive because it enumerates many signed $p$-wedges. Consider the signed bipartite graph in Figure~\ref{fig:wedge-triplet-comp}, where we aim to count balanced $(3,3)$-bicliques anchored at $u_1$. In the wedge-centric approach, each neighbor $v_i \in \Gamma(u_1)$ is processed independently, and all signed $3$-wedges of the form $(u_1, v_i, (u_j, u_k))$ with $u_j, u_k \in \Gamma(v_i)\setminus\{u_1\}$ are enumerated. This leads to six wedge enumerations from $u_1$; three through $v_1$ and one each via $v_2$, $v_3$, and $v_4$ (shown in Figure~\ref{fig:wedge-triplet-comp}(a)). In contrast, the vertex-based pruning approach first constructs the candidate set $C(u_1)=\{u_j \mid |N(u_1) \cap N(u_j)| \geq 3\},$ which yields $C(u_1)=\{u_2,u_3,u_4\}$. It then enumerates vertex triplets $(u_1,u_j,u_k)$ with $u_j,u_k \in C(u_1)$ only once, producing just three candidates: $(u_1,u_2,u_3)$, $(u_1,u_2,u_4)$, and $(u_1,u_3,u_4)$, (as shown in Figure~\ref{fig:wedge-triplet-comp}(b)). The detailed vertex-based pruning algorithm and pseudocode are presented below.

% For example, consider the signed bipartite graph in which we aim to count balanced $(3,3)$-bicliques, shown in Figure~\ref{fig:wedge-triplet-comp}. We first explain how the wedge-centric algorithm processes this graph when anchored at vertex $u_1$. In the wedge-centric algorithm, each neighbor $v_i \in N(u_1)$ is processed independently. For a fixed $v_i$, the algorithm enumerates all signed $3$-wedges of the form $(u_1, v_i, (u_j, u_k))$ with $u_j, u_k \in N(v_i)\setminus\{u_1\}$. In the example, this results in six wedge enumerations from $u_1$: three through $v_1$ and one each through $v_2$, $v_3$, and $v_4$ (shown in Figure~\ref{fig:wedge-triplet-comp}(a)). In contrast, using the same graph shown in Figure~\ref{fig:wedge-triplet-comp} and the same anchor vertex $u_1$, the Vertex-based pruning algorithm first computes the candidate set $C(u_1)=\{u_j \mid |N(u_1)\cap N(u_j)| \geq 3\},$
% which, in this example, yields $C(u_1)=\{u_2,u_3,u_4\}$. Next, instead of expanding through each neighbor $v_i$ independently, the algorithm enumerates triplets of the form $(u_1,u_j,u_k)$ with $u_j,u_k \in C(u_1)$ exactly once. For the running example, only three candidate triplets are generated: $(u_1,u_2,u_3)$, $(u_1,u_2,u_4)$, and $(u_1,u_3,u_4)$,  shown in Figure~\ref{fig:wedge-triplet-comp}(b). The detailed vertex-based pruning algorithm and pseudocode are presented below.

\begin{figure*}[ht]
\centering
\begin{tikzpicture}[scale=0.70, every node/.style={scale=0.70}]

% =====================
% Styles
% =====================
\tikzset{
  u/.style={circle, fill=black, minimum size=2.5mm, inner sep=0pt},
  v/.style={circle, draw=black, fill=white, minimum size=2.5mm, inner sep=0pt},
  edge/.style={gray, thick},
  pairbox/.style={draw, rounded corners=4pt, thick, inner sep=4pt},
  pair23/.style={pairbox, fill=blue!12},
  pair24/.style={pairbox, fill=green!12},
  pair34/.style={pairbox, fill=orange!15},
  trip/.style={draw, thick, rounded corners=4pt, inner sep=5pt, fill=orange!18}
}

% =====================================================
% TOP: INPUT SIGNED BIPARTITE GRAPH
% =====================================================
\begin{scope}[shift={(-3.5,0)}, scale=1.1]

% U nodes
\node[u,label=above:$u_1$] (u1) at (-1.5,1.2) {};
\node[u,label=above:$u_2$] (u2) at (-0.5,1.2) {};
\node[u,label=above:$u_3$] (u3) at (0.5,1.2) {};
\node[u,label=above:$u_4$] (u4) at (1.5,1.2) {};

% V nodes
\node[v,label=below:$v_1$] (v1) at (-1.5,0) {};
\node[v,label=below:$v_2$] (v2) at (-0.5,0) {};
\node[v,label=below:$v_3$] (v3) at (0.5,0) {};
\node[v,label=below:$v_4$] (v4) at (1.5,0) {};

% -----------------
% Explicit edges
% -----------------

% u1 -> v1,v2,v3,v4
\draw[edge] (u1)--(v1);
\draw[edge] (u1)--(v2);
\draw[edge] (u1)--(v3);
\draw[edge] (u1)--(v4);

% u2 -> v1,v2,v3
\draw[edge] (u2)--(v1);
\draw[edge] (u2)--(v2);
\draw[edge] (u2)--(v3);

% u3 -> v1,v2,v4
\draw[edge] (u3)--(v1);
\draw[edge] (u3)--(v2);
\draw[edge] (u3)--(v4);

% u4 -> v1,v3,v4
\draw[edge] (u4)--(v1);
\draw[edge] (u4)--(v3);
\draw[edge] (u4)--(v4);

% -----------------
% Caption
% -----------------
\node at (0,-0.7) {\textbf{Input: Signed Bipartite Graph}};

\end{scope}

% =====================================================
% (a) LEFT-BOTTOM: WEDGE-CENTRIC
% =====================================================
\begin{scope}[shift={(-8,-1.7)}, scale=0.9]

\node[u,label=above:$u_1$] (ua) at (0,0) {};
\node at (-5,0.5) {Start vertex $u_1$:};
\node[v,label=left:$v_1$] (va1) at (-3,-0.8) {};
\node at (-5.6,-0.8) {\textbf{$\Gamma(u_1)$}};

\node[v,label=left:$v_2$] (va2) at (-1,-0.8) {};
\node[v,label=left:$v_3$] (va3) at (1,-0.8) {};
\node[v,label=left:$v_4$] (va4) at (3,-0.8) {};

\draw[edge] (ua)--(va1);
\draw[edge] (ua)--(va2);
\draw[edge] (ua)--(va3);
\draw[edge] (ua)--(va4);

\node[pair23] (a23a) at (-4.2,-2.0) {$(u_2,u_3)$};
\node[align=center] at (-5.8,-2.4) {
  \textbf{$\Gamma(v_i)$} \\
  % $[\rho(u_1) > \rho(u_{i}> \rho(u_{i+1})] $
};
\node[align=center] at (-3.4,-2.9) {
  \textbf{$\rho(u_1) > \rho(u_{i)})> \rho(u_{i+1})$;} 
  \\ $i>1$
};

\node[pair24] (a24a) at (-2.5,-2.0) {$(u_2,u_4)$};
\node[pair34] (a34a) at (-0.8,-2.0) {$(u_3,u_4)$};

\node[pair23] (a23b) at (0.9,-2.0) {$(u_2,u_3)$};
\node[pair24] (a24b) at (2.6,-2.0) {$(u_2,u_4)$};
\node[pair34] (a34b) at (4.3,-2.0) {$(u_3,u_4)$};

\draw[edge] (va1)--(a23a);
\draw[edge] (va1)--(a24a);
\draw[edge] (va1)--(a34a);

\draw[edge] (va2)--(a23b);
\draw[edge] (va3)--(a24b);
\draw[edge] (va4)--(a34b);

\node[
    anchor=west,
    align=left,
    font=\small
] (wedgelist) at (-2.6, -5.0) {
    \emph{Anchor: $u_1$} \\[-1pt]
    
    $v_1:$ $(u_1,v_1,(u_2,u_3))$\\
    \phantom{$v_1:$} $(u_1,v_1,(u_2,u_4))$\\
    \phantom{$v_1:$} $(u_1,v_1,(u_3,u_4))$\\[-1pt]
    $v_2:$ $(u_1,v_2,(u_2,u_3))$\\[-1pt]
    $v_3:$ $(u_1,v_3,(u_2,u_4))$\\[-1pt]
    $v_4:$ $(u_1,v_4,(u_3,u_4))$
};

\node[
    anchor=west,
    align=left,
    font=\small,
    draw,
    dotted,
    thick,
    inner sep=4pt
] (wedgecount) at (-2.6, -7.0) {
    \emph{\textbf{Total 3-wedge enumerated = $6$}}
};

\node[
    anchor=west,
    align=left,
    font=\small
] at (-2.6, -8.5) {
    \emph{\textbf{Counting:}} \\[-1pt]
    $(u_2,u_3) = 2$ \\[-1pt]
    $(u_2,u_4) = 2$ \\[-1pt]
    $(u_3,u_4) = 2$ \\[2pt]
    \emph{\textbf{Balanced $(3,3)$-bicliques = $0$}}
};

\node at (0,-12.0) {\textbf{(a) Wedge-centric}};
\end{scope}

% =====================================================
% (b) RIGHT-BOTTOM: VERTEX-FIRST
% =====================================================
\begin{scope}[shift={(0.5,-1.7)}, scale=0.9]

\node[u,label=above:$u_1$] (ub) at (0,0) {};
\node at (3.2,0.5) {Start vertex $u_1$};

\node[v,label=left:$v_1$] (vb1) at (-3,-0.8) {};
\node at (4,-0.8) {\textbf{$\Gamma(u_1)$}};

\node[v,label=left:$v_2$] (vb2) at (-1,-0.8) {};
\node[v,label=left:$v_3$] (vb3) at (1,-0.8) {};
\node[v,label=left:$v_4$] (vb4) at (3,-0.8) {};

\draw[edge] (ub)--(vb1);
\draw[edge] (ub)--(vb2);
\draw[edge] (ub)--(vb3);
\draw[edge] (ub)--(vb4);

\node[u,label=below:$u_2$] (u12) at (-3.6,-2.0) {};
\node[u,label=below:$u_3$] (u13) at (-3.0,-2.0) {};
\node[u,label=below:$u_4$] (u14) at (-2.4,-2.0) {};
\node[u,label=below:$u_2$] (u22) at (-1.4,-2.0) {};
\node[u,label=below:$u_3$] (u23) at (-0.6,-2.0) {};
\node[u,label=below:$u_2$] (u32) at (0.6,-2.0) {};
\node[u,label=below:$u_4$] (u34) at (1.4,-2.0) {};
\node[u,label=below:$u_3$] (u43) at (2.6,-2.0) {};
\node[u,label=below:$u_4$] (u44) at (3.4,-2.0) {};

\draw[edge] (vb1)--(u12);
\draw[edge] (vb1)--(u13);
\draw[edge] (vb1)--(u14);
\draw[edge] (vb2)--(u22);
\draw[edge] (vb2)--(u23);
\draw[edge] (vb3)--(u32);
\draw[edge] (vb3)--(u34);
\draw[edge] (vb4)--(u43);
\draw[edge] (vb4)--(u44);

\node[
  draw, thick, inner sep=6pt,
  minimum width=7.2cm, minimum height=0.8cm,
  fit=(u12)(u13)(u14)(u22)(u23)(u32)(u34)(u43)(u44),
   yshift=-4pt
] (ubox) {};

\node[align=center] at (4.4,-2.4) {
  \textbf{$\Gamma(v_i)$}
};

\node[align=center] at (2.6,-3.2) {
  $|N(u_1)\cap N(u_i)| \ge 3;$ \\
  $i>1$
};

\node[trip] (T) at (0,-3.8) {$u_2,u_3,u_4$};

\node (X) at (-2.8,-3.8) {Candidate set (C):};

\draw[
  ->, thick, double, double distance=1.2pt, draw=gray!60
] (ubox.south) -- (T.north);

\node[pair23] (r23) at (-1.5,-5.2) {$u_2,u_3$};
\node[pair24] (r24) at (0,-5.2) {$u_2,u_4$};
\node[pair34] (r34) at (1.5,-5.2) {$u_3,u_4$};

\draw[edge] (T)--(r23);
\draw[edge] (T)--(r24);
\draw[edge] (T)--(r34);

\node[
    anchor=west,
    align=left,
    font=\small
] (triplets) at (-2, -7.0) {
 \emph{Anchor: $u_1$} \\[-1pt]
    $(u_1,u_2,u_3)$\\[-1pt]
    $(u_1,u_2,u_4)$\\[-1pt]
    $(u_1,u_3,u_4)$
};
\node[
    anchor=west,
    align=left,
    font=\small,
    draw,
    dotted,
    thick,
    inner sep=4pt
] (tripcount) at (-2, -8.2) {
    \emph{\textbf{Total triplets= $3$}}
};

\node[
    anchor=west,
    align=left,
    font=\small
] at (-2, -10.1) {
\emph{\textbf{Counting:}} \\
 $N(u_1)\cap N(u_i)\cap N(u_j)>= 3$ \\
    $\{u_1,u_2,u_3\}$  $\{v_1,v_2\}$ $=2$ \\[1pt]
    $\{u_1,u_2,u_4\}$  $\{v_1,v_3\}$ $=2$\\[1pt]
    $\{u_1,u_3,u_4\}$  $\{v_1,v_4\}$ $=2$  \\[3pt]

     \emph{\textbf{Balanced $(3,3)$-bicliques = $0$}}
};

\node at (0,-12.0) {\textbf{(b) Vertex-based pruning}};
\end{scope}

\end{tikzpicture}

\caption{An example illustrating wedge-centric and vertex-based pruning algorithms.}
\label{fig:wedge-triplet-comp}
\end{figure*}

\begin{algorithm}[ht]
\DontPrintSemicolon
\caption{\textbf{BBVP}: Vertex-based Pruning Balanced $(p,q)$-Biclique Counting}
\label{alg:bbvp}
\KwIn{Signed bipartite graph $G=(U,V,E)$}
\KwOut{$b_{pq}$: number of balanced $(p,q)$-bicliques}

Select the smaller partition $S \in \{U,V\}$; $b_{pq} \gets 0$\;

\ForEach{$u \in S$}{

    % Initialize empty map $\mathsf{cnt}$\;
    % Initialize empty map $\mathsf{L}$\;
      Initialize two empty maps $\mathsf{cnt}$ , $\mathsf{L}$\;

    \ForEach{$v \in \Gamma(u)$}{
        \ForEach{$w \in \Gamma(v)$ such that $w \in S$ and $\rho(w) > \rho(u)$}{
            $\mathsf{cnt}[w] \gets \mathsf{cnt}[w] + 1$ \tcp*{\textcolor{blue}{shared neighbors with $u$}}
        }
    }

    Let $C = \{w \in S \mid \mathsf{cnt}[w] \ge q\}$ \tcp*{\textcolor{blue}{necessary pairwise condition}}
    \If{$|C| < p-1$}{continue}

    \ForEach{$w \in C$}{
        $\mathsf{L}[w] \gets \Gamma(u) \cap \Gamma(w)$  \tcp*{\textcolor{blue}{common neighbors with $u$}} %Sort $\mathsf{L}[w]$\;
    }

    \ForEach{$A = (w_1,\dots,w_{p-1}) \subset C$ with $w_1 < \dots < w_{p-1}$}{

        Let $T = \bigcap_{w \in A} \mathsf{L}[w]$ \tcp*{\textcolor{blue}{adjacent to all anchors}}
        \If{$|T| < q$}{continue}

        Initialize empty map $\mathbf{x}$ \tcp*{\textcolor{blue}{signed $p$-wedge type count}}

        \ForEach{$v \in T$}{

            Compute $\kappa = c_1c_2...c_{p-1}$ where $c_i = s$ when $\text{sign}(u, v) = \text{sign}(v, w_i)$ and $c_i = d$ otherwise\;
            %$\sigma(v) = (\text{sign}(u,v), \text{sign}(w_1,v), \dots, \text{sign}(w_{p-1},v))$\;
            %Canonicalize $\sigma(v)$ under global sign flip to obtain type $\tau$\;
            $\mathbf{x}[\kappa] \gets \mathbf{x}[\kappa] + 1$\;
        }

        \ForEach{$p$-wedge type $\kappa$ in $\mathbf{x}$}{
            $b_{pq} \gets b_{pq} + \binom{\mathbf{x}[\kappa]}{q}$\;
        }
    }
}

\Return $b_{pq}$\;
\end{algorithm}

\textbf{Algorithm:} 
The vertex-based pruning algorithm, presented in Algorithm~\ref{alg:bbvp}, first selects the smaller partition between $U$ and $V$, denoted $S$, as the anchor side to reduce the search space. A global counter $b_{pq}$ is initialized to zero. The algorithm then iterates over each vertex $u \in S$, treating $u$ as an anchor vertex. For every anchor, two local data structures are maintained: a counter map $\mathsf{cnt}$, which records the number of shared neighbors between $u$ and other vertices in $S$, and a map $\mathsf{L}$, which stores common-neighbor lists required during biclique construction. For each neighbor $v \in \Gamma(u)$, the algorithm explores vertices $w \in \Gamma(v)$ that also belong to $S$ and satisfy the ordering constraint $\rho(w) > \rho(u)$, ensuring that each biclique is counted exactly once. The counter $\mathsf{cnt}[w]$ is incremented for each such occurrence, so that $\mathsf{cnt}[w]=|\Gamma(u)\cap\Gamma(w)|$, thereby recording the number of neighbors shared by $u$ and $w$. After this step, a candidate set $C = \{\,w \in S \mid \mathsf{cnt}[w] \ge q \,\}$ is formed, containing vertices that satisfy the necessary pairwise condition for forming a $(p,q)$-biclique with anchor $u$. If $|C| < p-1$, then $u$ cannot participate in any such biclique, and the algorithm proceeds to the next anchor, discarding the vertex $u$. Otherwise, for each $w \in C$, the algorithm maintains a list $\mathsf{L}[w]$, the set of vertices directly connected to both $u$ and $w$. 

The algorithm then enumerates all ordered $(p-1)$-subsets $A=(w_1, \dots, w_{p-1}) \subset C$. Together with $u$, these vertices define an $S$ side partition with $p$ vertices of a candidate biclique. Their common neighbors $T = \bigcap_{w \in A} \mathsf{L}[w]$ are computed, which represents the set of vertices in the other partition that are adjacent to all $p$ anchor-side vertices. If $|T|<q$, the candidate set $A$ is discarded. For each vertex $v \in T$, the algorithm forms an \spwedge whose type $x_{\kappa}$ is generated following the sign of the edges $(v, w_i)$ with respect to the sign of $(u, v)$ as described in Section~\ref{sec:wedge}. Finally, using Lemma~\ref{lemma-1}, we count the number of  balanced $(p, q)$-bicliques.

\section{Experimental Evaluation}
\label{Sec:EE}
% In this section, we evaluate the performance of the proposed wedge-centric and
% triplet-centric algorithms and compare them against the baseline algorithm
% SBCList++.
% \subsection{Algorithms in Our Experiments}
We evaluated the following algorithms.

\begin{itemize}
  \item \textbf{SBCList++:} A baseline algorithm adapted from \textbf{BCList++}~\cite{yang2023p}
  for counting balanced $(p,q)$-bicliques in signed bipartite graphs.

  \item \textbf{BBWC:} A proposed wedge-centric algorithm for counting balanced $(p,q)$-bicliques,
  as described in Section~\ref{sec:wedge}.

  \item \textbf{BBVP:} A proposed vertex-based pruning algorithm for counting balanced $(p,q)$-bicliques,
  as described in Section~\ref{sec:triplet}.

  %     \item \textbf{P-BBTC:} Multi-core based parallel implementation of BBTC counting algorithm discussed in,
  % as described in Section~\ref{sec:triplet33par1}.
\end{itemize}

\textbf{Computing Resources:} We implemented all the algorithms in C++. Experiments were conducted on a workstation with an Intel Xeon E5-1630 v4 CPU (4 cores, 8 threads, 3.70~GHz), 32~GB RAM, running Ubuntu 64-bit. 

% We utilize the \texttt{Intel TBB} library to enable parallelism in our implementation.

\begin{table}[!ht]
\footnotesize
\renewcommand{\arraystretch}{0.9}
\caption{Characteristics of the datasets and balanced $(3,3)$-biclique counts.}
\label{table:datasets_disc}
\centering
\begin{tabular}{|l|r|r|r|r|}
\hline
\textbf{Dataset} 
& \textbf{$|U|$} 
& \textbf{$|V|$} 
& \textbf{$|E|$} 
& \textbf{\# Balanced $(3,3)$-bicliques} \\
\hline
Drug\_Human (\textbf{DH})
& $820$ & $1{,}315$ & $4{,}128$ & $92{,}818$ \\
\hline
Senate (\textbf{SE})
& $145$ & $1{,}056$ & $27{,}083$ & $1{,}261{,}215{,}333$ \\
\hline
% DBLP (\textbf{DBLP})
% & $6{,}001$ & $1{,}308$ & $29{,}256$ & $2{,}204{,}537$ \\
% \hline
% Avitos (\textbf{AV})
% & $27{,}736$ & $16{,}589$ & $67{,}028$ & $586{,}475$ \\
% \hline
House (\textbf{HO})
& $515$ & $1{,}281$ & $114{,}378$ & $101{,}165{,}915{,}954$ \\
\hline
Astro (\textbf{AS})
& $13{,}643$ & $18{,}446$ & $198{,}050$ & $52{,}586{,}996$ \\
\hline
% Yelp (\textbf{Yelp})
% & $5{,}597$ & $31{,}512$ & $279{,}502$ & $41{,}586{,}181$ \\
% \hline
NIPS Papers (\textbf{NIPS})
& $1{,}500$ & $12{,}375$ & $746{,}315$ & $964.03\,\mathrm{B}$ \\
\hline
Jester 150 (\textbf{JE})
& $50{,}692$ & $140$ & $1{,}728{,}847$ & $1.00\,\mathrm{T}$ \\
\hline
KDD Cup (\textbf{KDD})
& $255{,}170$ & $1{,}848{,}114$ & $2{,}766{,}393$ & $10{,}621{,}521$ \\
\hline
AOL (\textbf{AOL})
& $4{,}811{,}647$ & $1{,}632{,}788$ & $10{,}741{,}953$ & $83{,}736{,}979$ \\
\hline
Epinions (\textbf{EP})
& $120{,}492$ & $755{,}760$ & $13{,}668{,}320$ & $497.95\,\mathrm{T}$ \\
\hline
\end{tabular}
\end{table}

\textbf{Datasets Description}: We evaluate our proposed algorithms on a diverse collection of $9$ real-world bipartite datasets. The \textbf{DH} dataset is obtained from~\cite{torres2016drug}, while \textbf{SE} and \textbf{HO} are collected from a signed bipartite network repository\footnote{\url{https://github.com/tylersnetwork/signed_bipartite_networks}}. The datasets \textbf{KDD}, and \textbf{AOL} are sourced from the bipartite network repository\footnote{\url{https://renchi.ac.cn/datasets/}}, and the remaining datasets, namely \textbf{AS}, \textbf{NIPS}, \textbf{JE}, and \textbf{EP}, are obtained from the KONECT collection\footnote{\url{http://konect.cc/networks/}}. 

Among the datasets, DH, SE, and HO are natively provided as signed bipartite graphs. Some datasets contain explicit rating information, which we convert into signed bipartite graphs by binarizing the ratings. Specifically, in the Jester-150 (JE) dataset, which uses a $10$-star rating system, ratings greater than $6$ are assigned a positive sign $(1)$, while all remaining ratings are assigned a negative sign $(0)$. Similarly, in the EP dataset, which uses a $5$-star rating system, ratings of $4$ or higher are labeled as positive $(1)$, and ratings of $3$ or lower are labeled as negative $(0)$. The remaining datasets (AS, NIPS, KDD, and AOL) are originally unsigned and are converted into synthetic signed bipartite graphs following the methodology in~\cite{chen2021maximum}. In this process, each edge is assigned a positive label $(1)$ with probability $0.7$ and a negative label $(0)$ with probability $0.3$. Table~\ref{table:datasets_disc} summarizes the characteristics of the datasets, including the total number of balanced $(3,3)$-bicliques; for extremely large counts, we report values in billions (B) and trillions (T).

\subsection{Performance Assessment}
We evaluate the performance of the proposed algorithms on the datasets described earlier and compare them with the baseline algorithm in terms of processing time and memory usage. We limit the maximum running time of each program to $5$ hours; if an execution exceeds this limit, we report the result as \texttt{INF}.

\textbf{Processing Time}: We first present the results for balanced $(p,q)$-bicliques with $p=q=3$, as shown in Table~\ref{tab:runtime_memory}(a). We begin by comparing the baseline method SBCList++ with BBWC, our first proposed algorithm. BBWC shows substantial improvement across all datasets, with the baseline finishing within the time limit. For example, on the DH dataset, BBWC reduces the runtime from $0.755\,s$ to $0.05\,s$, achieving a speedup of approximately $15.1\times$ (about 93\% reduction in runtime). On the SE dataset, the runtime decreases from $1937\,s$ to $3.77\,s$, corresponding to a speedup of about $561\times$. Similarly, on the AS dataset, BBWC achieves a $11.2\times$ speedup, and on the KDD dataset, a $2.8\times$ speedup. For the remaining datasets (HO, NIPS, and JE), the baseline algorithm SBCList++ exceeds the imposed time limit of $5$ hours; in contrast, BBWC completes within the time bound, demonstrating significantly better scalability. These improvements arise from fundamental algorithmic differences. Although SBCList++ correctly counts balanced bicliques, it performs redundant computations on many unbalanced biclique candidates. In the worst case, it may enumerate many $(3,3)$-bicliques that do not satisfy the balance constraint, leading to substantial computational overhead and increased execution time. However, we observe that although BBWC outperforms the baseline in general, its performance degrades on larger datasets such as AOL and EP. The main bottleneck in BBWC is wedge enumeration, which dominates runtime and reduces pruning effectiveness. This issue is particularly pronounced on datasets with larger minimum partition sizes, where the number of signed 3-wedges enumerated grows significantly. As a result, BBWC may fail to complete within the imposed time limit on such datasets. To address this, we propose BBVP, which reduces the overhead of signed wedge enumeration and improves pruning efficiency.

  \begin{table}[!htbp]
\centering
\setlength{\tabcolsep}{3pt}
\caption{Runtime and Memory usage comparison of proposed algorithms on various datasets with (3,3) parameter setting.}
\label{tab:runtime_memory}

% ================= Runtime =================
\begin{minipage}{0.44\textwidth}
\centering
\setlength{\tabcolsep}{4pt}
\renewcommand{\arraystretch}{0.85}
\textbf{(a) Processing Time (seconds)}\\[2pt]
\begin{tabular}{|l|l|l|l|}
\hline
\textbf{Dataset} & \textbf{SBCList++} & \textbf{BBWC} & \textbf{BBVP} \\
\hline
DH   & 0.755 & 0.05 & \textbf{0.01} \\
\hline

SE & 1937 & 3.77572 & \textbf{0.81424} \\
\hline

% DBLP & 10.425 & 0.672 & \textbf{0.108} \\
% \hline
% AV   & 0.761 & 2.716 & \textbf{0.572} \\
% \hline

HO   & INF & 231.848 & \textbf{41.381} \\
\hline

AS   & 493.240 & 44.848 & \textbf{7.381} \\
\hline
% Yelp & 102 & 107 & \textbf{17} \\
% \hline
NIPS & INF & 15717 & \textbf{3840} \\
\hline
JE   & INF & 1741 & \textbf{351} \\
\hline
KDD  & 217.075 & 87 & \textbf{8.211} \\
\hline
AOL  & INF & INF & \textbf{1764} \\
\hline
EP   & INF & INF & \textbf{8346} \\
\hline
\end{tabular}
\end{minipage}
\hfill
% ================= Memory =================
\begin{minipage}{0.48\textwidth}
\centering
\setlength{\tabcolsep}{4pt}
\renewcommand{\arraystretch}{0.85}
\textbf{(b) Memory Usage (MB)}\\[2pt]
\begin{tabular}{|l|l|l|l|}
\hline
\textbf{Dataset} & \textbf{SBCList++} & \textbf{BBWC} & \textbf{BBVP} \\
\hline
DH   & 0.006 & 0.005 & \textbf{0.003} \\
\hline

SE & 0.682 & 0.008 & \textbf{0.007} \\
\hline

% DBLP & 0.018 & 0.011 & 0.013 \\
% \hline
% AV   & 0.042 & 0.019 & 0.018 \\
% \hline

HO   & INF & 0.025 & \textbf{0.020} \\
\hline

AS   & 0.193 & 0.053 & \textbf{0.030} \\
\hline
% Yelp & 0.206 & 0.081 & 0.048 \\
% \hline
NIPS & INF & 0.17 & \textbf{0.133} \\
\hline
JE   & INF & 0.331 & \textbf{0.272} \\
\hline
KDD  & 1.117 & 0.747 & \textbf{0.626} \\
\hline
AOL  & INF & INF & \textbf{2.241} \\
\hline
EP   & INF & INF & \textbf{2.312} \\
\hline
\end{tabular}
\label{tab:untime_memory}
\end{minipage}

\end{table}

\begin{figure*}[!ht]
    \centering
    \includegraphics[width=0.89\textwidth]{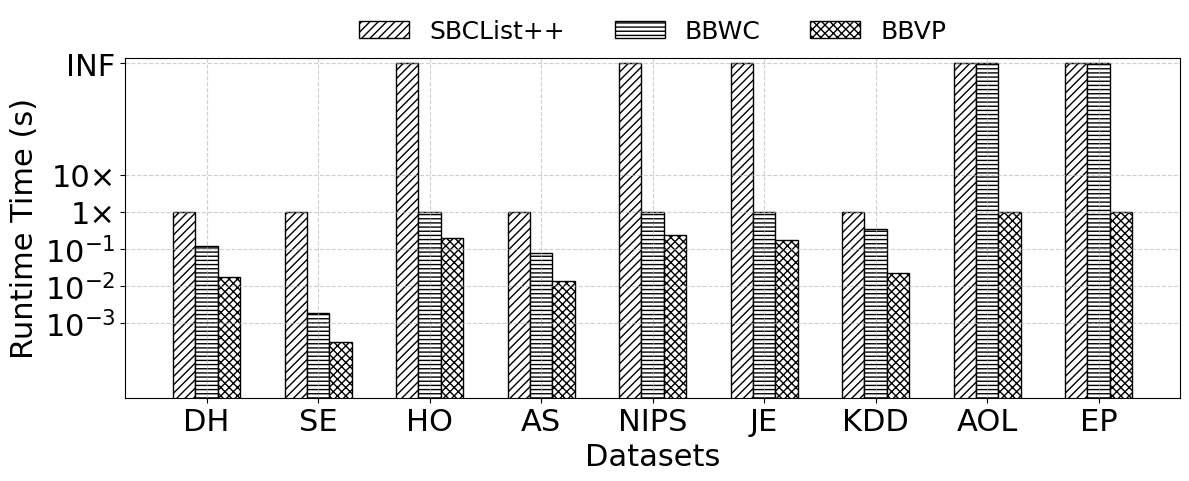}
\caption{Runtime (in seconds) of the proposed algorithms on various datasets for $(3,3)$ parameter setting.}
    \label{fig:33_runtime2}
\end{figure*}

As shown in Table~\ref{tab:runtime_memory}(a), BBVP consistently outperforms both BBWC and SBCLIst++ across all evaluated datasets. On the Datasets where BBWC already performs well, BBVP provides additional improvements, yielding an average speedup of approximately $10.7\times$ over BBWC. The gains are particularly notable on datasets with large search spaces, such as HO and AS, where BBVP achieves speedups of about $24.7\times$ and $18.8\times$, respectively. Even on smaller datasets like DH and SE, BBTC continues to reduce runtime by factors of $5\times$ and $3.8\times$. When compared with the baseline SBCList++, the improvements become even more pronounced. considering only the datasets where the baseline completes within the time limit (DH, SE, AS, and KDD), BBVP achieves an average speedup of approximately $561\times$. Furthermore, BBVP successfully completes on the largest datasets, such as AOL and EP, where both SBCList++ and BBWC exceed the $5$-hour time limit. These results demonstrate the strong scalability of BBVP for higher-order balanced biclique counting. Fig.~\ref{fig:33_runtime2} summarizes the runtime performance of the proposed algorithms compared with the baseline across all datasets. We report \textit{normalized runtime} to enable fair comparison. For each dataset, runtimes are divided by the SBCList++ runtime (when available). Thus, a value of $1\times$ denotes equal performance to SBCList++, values below $1$ indicate faster execution, and values above $1$ indicate slower execution. When SBCList++ does not finish, runtimes are normalized with respect to the slowest completed method on that dataset.

\begin{figure*}[!ht]
    \centering
    \includegraphics[width=0.85\textwidth]{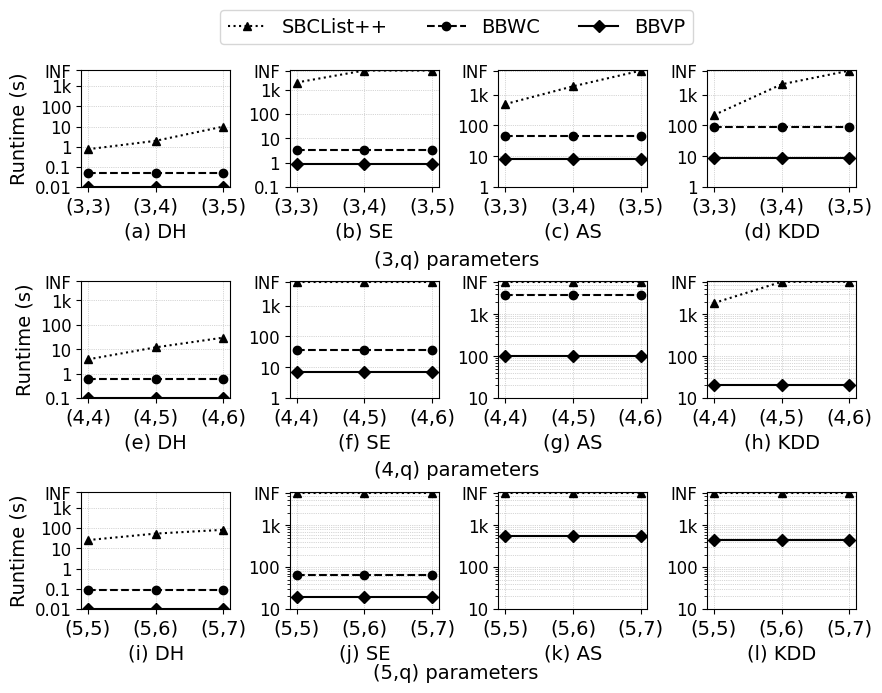}
    \caption{Running time (s) of the proposed algorithms for different $(p,q)$ parameter settings.}
    \label{fig:3k4k5kruntime1}
\end{figure*}

\begin{figure*}[!ht]
    \centering
    \includegraphics[width=0.85\textwidth]{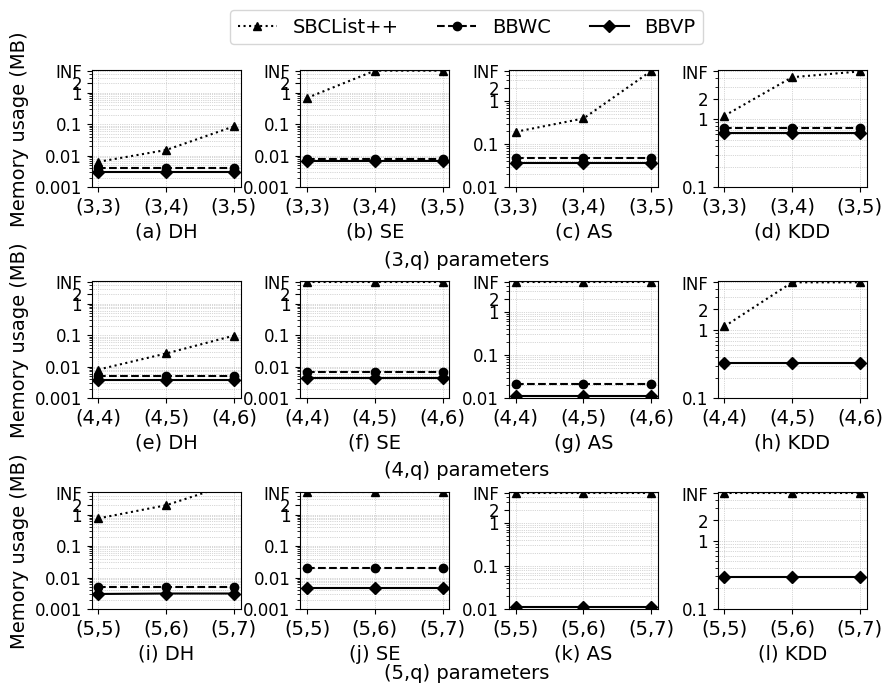}
    \caption{Memory usage (MB) of the proposed algorithms for different $(p,q)$ parameter settings.}
    \label{fig:3k4k5kmemtime}
\end{figure*}

Next, we extend the evaluation to more general balanced $(p,q)$-bicliques with $p \in \{3,4,5\}$ and $q \in \{3,4,5,6,7\}$. This is particularly relevant in motif analysis as motifs are generally small subgraphs that play significant roles in network analysis~\cite{li2021mixed,niu2025fast}. To ensure a fair comparison, we report results only for the datasets where the baseline method completes within the imposed time limit, namely DH, SE, AS, and KDD. Additionally, we observe that the baseline's performance deteriorates as $q$ increases. In contrast, the BBWC and BBVP algorithms maintain consistent performance as they have already separated the signed wedges that form balanced bicliques and choose the q combinations from them. Fig.~\ref{fig:3k4k5kruntime1} demonstrates the running time (s) of the proposed algorithms for different $(p,q)$ parameter settings.

\textbf{Memory Usage}: This subsection evaluates the memory consumption of the proposed algorithms across different datasets and parameter settings. Table~\ref{tab:runtime_memory}(b) reports the memory usage of all algorithms for counting balanced $(3,3)$-bicliques. The results demonstrate that the proposed methods, BBWC and BBVP, require less memory than the baseline SBCList++ across all datasets. The higher memory usage of SBCList++ is due to its reliance on additional auxiliary arrays to avoid frequent subgraph constructions. A comparison between the proposed methods further shows that BBVP is generally more memory-efficient than BBWC. This is because BBVP eliminates many signed wedges at an early stage if they cannot participate in a balanced biclique, thereby reducing intermediate storage requirements. Additionally, we observe that the baseline's memory usage is increasing as $q$ increases. In contrast, the BBWC and BBVP algorithms maintain consistent performance. Fig.~\ref{fig:3k4k5kmemtime} summarizes the memory usage of the proposed algorithms compared with the baseline across all datasets for different $(p,q)$ parameter settings.

% \begin{figure*}[!ht]
%     \centering
%     \includegraphics[width=0.99\textwidth]{Images/33_memory8.png}
%     \caption{Memory usage (MB) of the proposed algorithm on various datasets.}
%     \label{fig:33_memory}
% \end{figure*}

% Table~\ref{tab:runtime_memory}(b) reports the memory consumption of all algorithms for counting balanced $(3,3)$-bicliques. The proposed methods, BBWC and BBTC, exhibit memory usage comparable to or lower than the baseline SBCList++ across most datasets.  On datasets where SBCList++ fails to terminate (e.g., HO, NIPS, JE, AOL, and EP), memory usage is reported as INF in Table~\ref{tab:runtime_memory}(b). In contrast, both BBWC and BBTC maintain bounded memory consumption. While BBTC incurs slightly higher memory usage than BBWC on some datasets due to additional triplet-level bookkeeping, its memory footprint remains modest. Overall, the proposed algorithms achieve significant runtime improvements without incurring excessive memory overhead.

% Overall, Fig.~\ref{fig:33_runtime2} and Fig.~\ref{fig:33_memory} summarise the runtime performance and memory consumption of the proposed algorithms compared with the baseline across all datasets.

% \begin{figure*}[!ht]
%     \centering
%     \includegraphics[width=0.90\textwidth]{Images/33_runtime4.png}
% \caption{Runtime (in seconds) of the proposed algorithms on various datasets.}
%     \label{fig:33_runtime2}
% \end{figure*}

% \vspace{-1.3cm}

\section{Conclusion and Remarks}
\label{Sec:Con}

In this work, we introduced the problem of balanced $(p,q)$-biclique counting in signed bipartite graphs and proposed efficient wedge-centric and vertex-based pruning algorithms. 
Extensive experimental evaluation across multiple real-world datasets demonstrates that the proposed methods consistently outperform the adapted baseline, SBCList++, in both runtime and memory usage. An important direction for future work is to extend the proposed algorithms to multi-core and GPU architectures in order to further improve scalability. 
Another challenging and practically relevant direction is to adapt the proposed techniques to a dynamic signed bipartite graph.

\bibliography{lipics-v2021-sample-article}

\begin{thebibliography}{10}

\bibitem{aksoy2017measuring}
Sinan~G Aksoy, Tamara~G Kolda, and Ali Pinar.
\newblock Measuring and modeling bipartite graphs with community structure.
\newblock {\em Journal of Complex Networks}, 5(4):581--603, 2017.

\bibitem{chen2021maximum}
Chen Chen, Yanping Wu, Renjie Sun, and Xiaoyang Wang.
\newblock Maximum signed $\theta$-clique identification in large signed graphs.
\newblock {\em IEEE Transactions on Knowledge and Data Engineering}, 35(2):1791--1802, 2021.

\bibitem{chen2022efficient}
Lu~Chen, Chengfei Liu, Rui Zhou, Jiajie Xu, and Jianxin Li.
\newblock Efficient maximal biclique enumeration for large sparse bipartite graphs.
\newblock {\em Proceedings of the VLDB Endowment}, 15(8):1559--1571, 2022.

\bibitem{chen2021higher}
Zi~Chen, Long Yuan, Li~Han, and Zhengping Qian.
\newblock Higher-order truss decomposition in graphs.
\newblock {\em IEEE Transactions on Knowledge and Data Engineering}, 35(4):3966--3978, 2021.

\bibitem{chen2020efficient}
Zi~Chen, Long Yuan, Xuemin Lin, Lu~Qin, and Jianye Yang.
\newblock Efficient maximal balanced clique enumeration in signed networks.
\newblock In {\em Proceedings of The Web Conference 2020}, pages 339--349, 2020.

\bibitem{chung2023maximum}
Kai~Hiu Chung, Alexander Zhou, Yue Wang, and Lei Chen.
\newblock Maximum balanced (k, $\epsilon$)-bitruss detection in signed bipartite graph.
\newblock {\em Proceedings of the VLDB Endowment}, 17(3):332--344, 2023.
\newblock \href {https://doi.org/10.14778/3632093.3632099} {\path{doi:10.14778/3632093.3632099}}.

\bibitem{derr2019balance}
Tyler Derr, Cassidy Johnson, Yi~Chang, and Jiliang Tang.
\newblock Balance in signed bipartite networks.
\newblock In {\em Proceedings of the 28th ACM International Conference on Information and Knowledge Management}, pages 1221--1230, 2019.

\bibitem{heider1946attitudes}
Fritz Heider.
\newblock Attitudes and cognitive organization.
\newblock {\em The Journal of psychology}, 21(1):107--112, 1946.

\bibitem{kiran2024efficient}
Mekala Kiran, Apurba Das, and Suman Banerjee.
\newblock Efficient counting of balanced (2, k)-bicliques in signed bipartite graphs.
\newblock In {\em 2024 IEEE International Conference on Big Data (BigData)}, pages 735--740. IEEE, 2024.

\bibitem{li2021mixed}
Chao Li, Qiming Yang, Bowen Pang, Tiance Chen, Qian Cheng, and Jiaomin Liu.
\newblock A mixed strategy of higher-order structure for link prediction problem on bipartite graphs.
\newblock {\em Mathematics}, 9(24):3195, 2021.

\bibitem{luo2023efficient}
Wensheng Luo, Qiaoyuan Yang, Yixiang Fang, and Xu~Zhou.
\newblock Efficient core maintenance in large bipartite graphs.
\newblock {\em Proceedings of the ACM on Management of Data}, 1(3):1--26, 2023.

\bibitem{lyu2020maximum}
Bingqing Lyu, Lu~Qin, Xuemin Lin, Ying Zhang, Zhengping Qian, and Jingren Zhou.
\newblock Maximum biclique search at billion scale.
\newblock {\em Proceedings of the VLDB Endowment}, 2020.

\bibitem{mitzenmacher2015scalable}
Michael Mitzenmacher, Jakub Pachocki, Richard Peng, Charalampos Tsourakakis, and Shen~Chen Xu.
\newblock Scalable large near-clique detection in large-scale networks via sampling.
\newblock In {\em Proceedings of the 21th ACM SIGKDD International Conference on Knowledge Discovery and Data Mining}, pages 815--824, 2015.

\bibitem{niu2025fast}
Jason Niu, Jaroslaw Zola, and Ahmet~Erdem Sar{\i}y{\"u}ce.
\newblock Fast counting and utilizing induced 6-cycles in bipartite networks.
\newblock {\em IEEE Transactions on Knowledge and Data Engineering}, 2025.

\bibitem{opsahl2013triadic}
Tore Opsahl.
\newblock Triadic closure in two-mode networks: Redefining the global and local clustering coefficients.
\newblock {\em Social networks}, 35(2):159--167, 2013.

\bibitem{sanei2018butterfly}
Seyed-Vahid Sanei-Mehri, Ahmet~Erdem Sariyuce, and Srikanta Tirthapura.
\newblock Butterfly counting in bipartite networks.
\newblock In {\em Proceedings of the 24th ACM SIGKDD International Conference on Knowledge Discovery \& Data Mining}, pages 2150--2159, 2018.

\bibitem{shi2020parallel}
J.~Shi and J.~Shun.
\newblock Parallel algorithms for butterfly computations.
\newblock In {\em Symposium on Algorithmic Principles of Computer Systems (APoCS)}, pages 16--30, 2020.

\bibitem{sun2022maximal}
Renjie Sun, Yanping Wu, Chen Chen, Xiaoyang Wang, Wenjie Zhang, and Xuemin Lin.
\newblock Maximal balanced signed biclique enumeration in signed bipartite graphs.
\newblock In {\em 2022 IEEE 38th International Conference on Data Engineering (ICDE)}, pages 1887--1899. IEEE, 2022.

\bibitem{sun2023efficient}
Renjie Sun, Yanping Wu, Xiaoyang Wang, Chen Chen, Wenjie Zhang, and Xuemin Lin.
\newblock Efficient balanced signed biclique search in signed bipartite graphs.
\newblock {\em IEEE Transactions on Knowledge and Data Engineering}, 2023.

\bibitem{sun2020discovering}
Renjie Sun, Qiuyu Zhu, Chen Chen, Xiaoyang Wang, Ying Zhang, and Xun Wang.
\newblock Discovering cliques in signed networks based on balance theory.
\newblock In {\em International Conference on Database Systems for Advanced Applications}, pages 666--674. Springer, 2020.

\bibitem{torres2016drug}
N{\'u}ria~Ballber Torres and Claudio Altafini.
\newblock Drug combinatorics and side effect estimation on the signed human drug-target network.
\newblock {\em BMC systems biology}, 10:1--12, 2016.

\bibitem{wang2014rectangle}
Jia Wang, Ada Wai-Chee Fu, and James Cheng.
\newblock Rectangle counting in large bipartite graphs.
\newblock In {\em 2014 IEEE International Congress on Big Data}, pages 17--24. IEEE, 2014.

\bibitem{wang2024efficient}
Jianhua Wang, Jianye Yang, Zhaoquan Gu, Dian Ouyang, Zhihong Tian, and Xuemin Lin.
\newblock Efficient maximal biclique enumeration on large signed bipartite graphs.
\newblock {\em IEEE Transactions on Knowledge and Data Engineering}, 36(9):4618--4631, 2024.

\bibitem{wang2019vertex}
Kai Wang, Xuemin Lin, Lu~Qin, Wenjie Zhang, and Ying Zhang.
\newblock Vertex priority based butterfly counting for large-scale bipartite networks.
\newblock {\em PVLDB}, 2019.

\bibitem{xia2024gpu}
Y.~Xia, F.~Zhang, Q.~Xu, M.~Zhang, Z.~Yao, L.~Lu, X.~Du, D.~Deng, B.~He, and S.~Ma.
\newblock Gpu-based butterfly counting.
\newblock {\em The VLDB Journal}, 33(5):1543--1567, 2024.

\bibitem{yang2023p}
Jianye Yang, Yun Peng, Dian Ouyang, Wenjie Zhang, Xuemin Lin, and Xiang Zhao.
\newblock (p, q)-biclique counting and enumeration for large sparse bipartite graphs.
\newblock {\em The VLDB Journal}, 32(5):1137--1161, 2023.

\bibitem{yang2021efficient}
Yixing Yang, Yixiang Fang, Maria~E Orlowska, Wenjie Zhang, and Xuemin Lin.
\newblock Efficient bi-triangle counting for large bipartite networks.
\newblock {\em Proceedings of the VLDB Endowment}, 14(6):984--996, 2021.

\bibitem{yao2022identifying}
Kai Yao, Lijun Chang, and Jeffrey~Xu Yu.
\newblock Identifying similar-bicliques in bipartite graphs.
\newblock {\em Proceedings of the VLDB Endowment}, 15(11):3085--3097, 2022.

\bibitem{zou2016bitruss}
Zhaonian Zou.
\newblock Bitruss decomposition of bipartite graphs.
\newblock In {\em International conference on database systems for advanced applications}, pages 218--233. Springer, 2016.

\end{thebibliography}
\section{Appendix}
\appendix

\subsection{Structural property of Balanced Bicliques}
\label{app:spwedge-type-proof}
\begin{lemma}
    All \spwedge{s} in a balanced $(p, q)$-biclique are of the same type.
\end{lemma}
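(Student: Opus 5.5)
We read the statement as follows. Fix a balanced $(p,q)$-biclique $B(L,R)$ with $L$ on the anchor side. Order $L=\{u_1,w_1,\dots,w_{p-1}\}$ so that $\rho(u_1)>\rho(w_1)>\cdots>\rho(w_{p-1})$. Every signed $p$-wedge of $B$ then has the form $\omega_v=\langle u_1,w_1,\dots,w_{p-1},v\rangle$ for some $v\in R$. The claim is that the type $\kappa(v)=c_1(v)\cdots c_{p-1}(v)$ does not depend on $v$.

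The plan is to argue by contradiction, comparing two wedges coordinate by coordinate.

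\textbf{Setup.} Suppose two wedges $\omega_v$ and $\omega_{v'}$, with $v\neq v'\in R$, have different types. Let $i$ be an index with $c_i(v)\neq c_i(v')$; the first such index will do. Without loss of generality take $c_i(v)=s$ and $c_i(v')=d$.

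\textbf{Sign bookkeeping.} Encode signs as $\pm1$ and write $\sigma(\cdot)$ for the sign of an edge. By definition, $c_i(v)=s$ means $\sigma(u_1,v)\sigma(w_i,v)=+1$. Likewise $c_i(v')=d$ means $\sigma(u_1,v')\sigma(w_i,v')=-1$.

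\textbf{The unbalanced butterfly.} Since $B$ is a biclique, the four vertices $u_1,w_i,v,v'$ span a butterfly inside $B$. Its sign product is
\[
\sigma(u_1,v)\,\sigma(w_i,v)\,\sigma(u_1,v')\,\sigma(w_i,v') = (+1)(-1) = -1 .
\]
So this butterfly has an odd number of negative edges and is unbalanced. This contradicts Definition~\ref{bal-2-3-bcl}, so all wedges of $B$ share one type.

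\textbf{Expected obstacle.} The main point to get right is not the algebra but the meaning of ``all signed $p$-wedges in $B$''. One might worry about wedges built on a different $p$-subset of $L$, or with a different vertex playing the role of $u_1$. In a $(p,q)$-biclique the anchor-side set is exactly $L$, of size $p$, so there is no other $p$-subset. The priority ordering then fixes which vertex is $u_1$, so the reduction to comparing $\omega_v$ with $\omega_{v'}$ loses nothing.

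Finally, we note that balance is equivalent to requiring $\sigma(u_1,v)\sigma(w_i,v)$ to be independent of $v$ for each $i$. The same butterfly argument, applied in reverse, gives this converse for free. That converse is what justifies the counting step in the algorithms, namely adding $\binom{B_i[\omega]}{q}$ per bucket.
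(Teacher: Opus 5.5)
Your proof is correct and follows the paper's own argument: assume two wedges differ in type, pick a coordinate $i$ where they differ, and observe that the butterfly $(u_1,w_i,v,v')$ then has sign product $-1$, so it is unbalanced. Your explicit convention that the $p$ leaves of every wedge are exactly the anchor side $L$ is what the paper tacitly assumes, and it is really needed: a butterfly with $\sigma(w_1,v)=\sigma(w_1,v')=-1$ and its other two edges positive is balanced, yet its $2$-wedges have type $d$ when the leaves are in $L$ and type $s$ when the leaves are in $R$; likewise your closing converse is true, but for butterflies on $\{w_i,w_j\}$ it also needs the identity $\sigma(w_i,v)\sigma(w_j,v)=\epsilon_i\epsilon_j$, where $\epsilon_k=\sigma(u_1,v)\sigma(w_k,v)$, so it is not quite ``the same argument in reverse''.
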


\begin{proof}
    We will prove this by contradiction. Let there is a balanced $(p, q)$-biclique $b$ where there are two different types of \pwedge $\mathbf{x}^p_{\kappa} = \langle u, w_1, w_2, ..., w_{p-1}, v\rangle$ and $\mathbf{x}^p_{\kappa'} = \langle u, w_1, w_2, ..., w_{p-1}, v'\rangle$. Suppose in the representations of $\kappa = c_1c_2...c_{p-1}$ and $\kappa'=c_1'c_2'...c_{p-1}'$, $i$ is the first index where $c_i\neq c_i'$. Without loss of generality, assume that $c_i=s$ and $c_i'=d$. Then, the butterfly $(u, w_i, v, v')$ is an unbalanced butterfly. This is a contradiction.
% \hfill $\square$
\end{proof}

\subsection{Time complexity Analysis of Algorithm~\ref{algo:bucketpq}}
\label{app:complexity-proof}

\begin{theorem}
    The worst case time complexity of Algorithm~\ref{algo:bucketpq} for counting \bpqbc in a signed bipartite graph $G = (U, V, E)$ with $|U|=m, |V|=n$ is $O(m\Delta\binom{\Delta}{p})$ where $\Delta$ is the maximum degree of $G$.
\end{theorem}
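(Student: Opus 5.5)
We bound the work of Algorithm~\ref{algo:bucketpq} one anchor $u\in S$ at a time and then sum over the at most $m$ anchors. The work splits into a preprocessing phase, an enumeration phase (lines~5--8), and an aggregation phase (lines~9--11). We charge each phase to the number of signed $p$-wedges rooted at $u$.

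\textbf{Preprocessing.} Computing $\rho(\cdot)$ needs the degrees, which take $O(|E|)$ time, followed by a counting sort over degrees in $O(m+n+\Delta)$ time. Sorting each adjacency list by $\rho$ costs $O(|E|)$ more with bucket sort. This is dominated by the main term, since $|E|\le m\Delta$. Choosing the smaller partition $S$ takes constant time, and $|S|\le m$.

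\textbf{Enumeration.} Fix an anchor $u$. Its outer loop visits at most $\deg(u)\le\Delta$ neighbours $v$. For each $v$, the inner loop ranges over ordered $(p-1)$-subsets of $\Gamma(v)$ whose vertices have priority below $\rho(u)$. Because the adjacency lists are sorted, these subsets can be generated in decreasing priority in $O(1)$ amortised time each. Their number is at most $\binom{\Delta}{p-1}\le\binom{\Delta}{p}\cdot\frac{p}{\Delta-p+1}$, which is $O(\binom{\Delta}{p})$ for fixed $p$ and $\Delta\ge 2p$; the small-$\Delta$ case is trivially constant. For a fixed $p$, determining the type $\kappa_i$ of a wedge costs $O(p)=O(1)$ sign comparisons. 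Incrementing $B_i[\omega]$ in a hash map keyed by the sorted $p$-tuple costs $O(1)$ expected time. Enumeration for $u$ therefore costs $O(\Delta\binom{\Delta}{p})$.

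\textbf{Aggregation.} The number of entries across all $2^{p-1}$ buckets is at most the number of increments just performed, so it is $O(\Delta\binom{\Delta}{p})$. Each entry contributes $\binom{B_i[\omega]}{q}$, and this binomial coefficient can be computed in $O(q)=O(1)$ time, or looked up from a precomputed table of size $O(\Delta)$ since $B_i[\omega]\le\Delta$. Resetting the buckets costs no more than filling them. Summing the three phases over all anchors gives $O(m\Delta\binom{\Delta}{p})$.

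The main obstacle is the accounting step. The bound $\binom{\Delta}{p-1}$ on subsets per neighbour has to be reconciled with the stated $\binom{\Delta}{p}$. The cleanest route is to note that $\binom{\Delta}{p-1}=O(\binom{\Delta}{p})$ for constant $p$ and $\Delta\ge 2p$. We must also make explicit that $p$ and $q$ are treated as constants, and that the hash-map and sign-type operations are $O(1)$ per wedge. Without these assumptions the bound picks up a factor of $2^{p-1}$ or $p$.
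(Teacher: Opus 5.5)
Your proof is correct and follows the same route as the paper's. You bound the signed $p$-wedges generated at each anchor by $\Delta$ neighbours times the admissible $(p-1)$-subsets, bound the bucket entries by the number of increments, and multiply by at most $m$ anchors. You are more careful than the paper, which just asserts the per-anchor $\Delta\binom{\Delta}{p}$ bound. You can also drop the case split: $u$ is never one of the $w_i$, so $\binom{\Delta-1}{p-1}\le\binom{\Delta}{p}$ suffices, and this holds for every $\Delta\ge p$.

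The one thing to tighten is that a hash map gives only \emph{expected} $O(1)$ per increment, which does not support the ``worst case'' claim. Group equal keys deterministically instead, e.g.\ radix-sort each anchor's wedge records on (type, $w_1,\dots,w_{p-1}$) after relabeling $u$'s two-hop neighbourhood with integers at most $\Delta^2$; this still costs $O(\Delta\binom{\Delta}{p})$ per anchor for constant $p$.
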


\begin{proof}
The time complexity has two parts. First, we compute the complexity of counting the number of vertices connected to each \spwedge in the respective bucket. For this, we construct a vertex subset of size $p$ within the $2$-hop neighborhood of a vertex $u$. The worst-case complexity of this counting is $\Delta\times\binom{\Delta}{p}$ as $\Delta$ is the maximum degree in the graph. For choosing a specific \spwedge respecting the vertex order, we can sort the vertices using counting sort in time $O(m)$. In the second phase of the algorithm, we count the number of \bpqbc using the information of the number of common vertices for each \spwedge (Lines $8$-$10$). For this, we require $\Delta\times\binom{\Delta}{p}$ entries across the buckets, since we have processed these many \spwedge{s} in the first phase of the algorithm. Thus, overall time complexity becomes $O(m\Delta\binom{\Delta}{p})$.
\end{proof}

\end{document}